\documentclass[a4paper, 12pt, oneside]{article}
\usepackage{amsmath}
\usepackage{amssymb}
\usepackage{enumerate}
\usepackage{graphicx}
\usepackage{epsf}
\usepackage{verbatim}


\usepackage[amsmath,thmmarks]{ntheorem}
\newcounter{thm} \setcounter{thm}{0}

\newtheorem{Thm}[thm]{Theorem}

\newtheorem{Lem}[thm]{Lemma}
\newtheorem{Prop}[thm]{Proposition}
\newtheorem{Cor}[thm]{Corollary}

\setlength{\parindent}{0pt}
\setlength{\parskip}{2ex}
\addtolength{\hoffset}{-1cm}
\addtolength{\textwidth}{2cm}

\setcounter{equation}{0}

\theoremstyle{nonumberplain}
\theoremheaderfont{\normalfont\itshape}
\theorembodyfont{\normalfont}
\theoremseparator{.}
\theoremsymbol{\ensuremath{\square}}
\newtheorem{proof}{Proof}

\def \Z {\mathbb Z}

\def \C {\mathbb C}

\def \d {\mathrm{d}}

\begin{document}
\title{Virasoro constraints and polynomial recursion for the linear Hodge integrals}
\author{Shuai Guo ,  Gehao Wang}
\date{}
\maketitle

\abstract
The Hodge tau-function is a generating function for the linear Hodge integrals. It is also a tau-function of the KP hierarchy. In this paper, we first present the Virasoro constraints for the Hodge tau-function in the explicit form of the Virasoro equations. The expression of our Virasoro constraints is simply a linear combination of the Virasoro operators, where the coefficients are restored from a power series for the Lambert W function. Then, using this result, we deduce a simple version of the Virasoro constraints for the linear Hodge partition function, where the coefficients are restored from the Gamma function. Finally, we establish the equivalence relation between the Virasoro constraints and polynomial recursion formula for the linear Hodge integrals.

{\bf MSC(2010):} Primary 81R10, 81R12, 14H70; Secondary 17B68.

{\bf Keywords:} Virasoro constraints, Hodge integral, polynomial recursion formula.

\section{Introduction}
Let $\overline{M}_{g,n}$ be the moduli space of stable curves of genus $g$ with $n$ marked points, and $\psi_i$ be the first Chern class of the cotangent space over $\overline{M}_{g,n}$ at the $i$th marked point. Let $\lambda_j$ be the $j$th Chern class of the Hodge bundle over $\overline{M}_{g,n}$. The linear Hodge integrals are the intersection numbers of the form
$$\left<\lambda_j\tau_{d_1}\dots \tau_{d_n}\right>=\int_{\overline{M}_{g,n}}\lambda_j\psi_1^{d_1}\dots \psi_n^{d_n}.$$
They are defined to be zero when the numbers $j$ and $d_i$ do not satisfy the condition
\begin{equation}\label{condition}
j+\sum_{i=1}^n d_i=\dim(\overline{M}_{g,n})=3g-3+n.
\end{equation}
The Hodge tau-function $\exp(F_H(u,q))$ is a properly arranged generating function for the linear Hodge integrals. In \cite{MK}, Kazarian proved that it is a tau-function for the KP hierarchy. It is a basic fact that the space of all tau-functions of the KP hierarchy forms an orbit under the action of the so-called $\widehat{GL(\infty)}$ group. And this group is constructed via the exponential map from the infinite dimensional Lie algebra $\widehat{\mathfrak{gl}(\infty)}$ \cite{MJE}. The {\bf Virasoro operators} $L_m, m\in \Z,$ actually belong to this Lie algebra. They are in the form
$$L_m=\sum_{k>0,k+m>0} (k+m)q_k\frac{\partial}{\partial q_{k+m}}+\frac{1}{2}\sum_{a+b=m} ab\frac{\partial^2}{\partial q_a\partial q_b}+\frac{1}{2}\sum_{i,j>0,i+j=-m}q_iq_j,$$
and they form a representation of the Virasoro algebra. In fact, the Hodge tau-function $\exp(F_H(u,q))$ can be defined using the Virasoro operators. Let $\widetilde{\phi_0}(u,q)=q_1$, and for $k\geq 0$,
\begin{equation}\label{def:phi}
\widetilde{\phi_{k+1}}(u,q)=\left( L_{-2}+2uL_{-1}+u^2L_0-\frac{1}{2}q_1^2 \right)\cdot \widetilde{\phi_k}(u,q).
 \end{equation}
The function $F_H(u,q)$ is defined as
\begin{equation}\label{FHuq}
F_H(u,q)=\sum (-1)^j\left<\lambda_j\tau_{d_1}\dots \tau_{d_n}\right>u^{2j}\prod_{i=1}^n \widetilde{\phi_{d_i}}(u,q).
\end{equation}
Recently, Alexandrov introduced a method of constructing Virasoro constraints for the Hodge tau-function using the Kac-Schwarz operators \cite{AE}. However, the power series used to describe the constraints in \cite{AE} are rather complicated. In this paper, we first present a simpler version of the Virasoro constraints for the Hodge tau-function $\exp(F_H(u,q))$ using Virasoro operators. Later we will see in details that this simple version allows us to establish the equivalence relation between the Virasoro constraints and the polynomial recursion relation for the linear Hodge integrals.
\begin{Thm}\label{VforHodge}
Define the power series $\eta$ to be
\begin{equation}\label{eta}
\eta=\sqrt{2\log(1+z)-\frac{2z}{1+z}}.
\end{equation}
And for $m=-1,0,1,2,\dots$, we let
$$\frac{(1+z)^2}{z^2}\eta^{2m+2}=\sum_{i=2m}^{\infty}\nu_i^{(m)}z^i,  \quad\quad  -\frac{z}{1+z}\eta^{2m+2}=\sum_{j=2m+2}^{\infty}\gamma_j^{(m)}z^j.$$
\begin{enumerate}[(a)]
\item The Hodge tau-function $\exp(F_H(u,q))$ is annihilated by the infinite sequence of operators $\{V_m^{(H)}\}$ (Virasoro constraints) as
\begin{equation}\label{maineq}
V_m^{(H)}\cdot \exp(F_H(u,q))=0, \quad (m\geq -1)
\end{equation}
where
\begin{equation}
V_m^{(H)}=\sum_{i=2m}^{\infty} \nu_i^{(m)}u^{i-2m}L_i+\sum_{j=2m+3}^{\infty}\gamma_j^{(m)}ju^{j-2m-3}\frac{\partial}{\partial q_j}-\frac{u ^2}{24}\delta_{m,-1}+\frac{1}{8}\delta_{m,0},
\end{equation}
and $L_i$ is the Virasoro operator. The operators $\{V_m^{(H)}\}$ form a subalgebra of the Virasoro algebra, namely,
$$[V_m^{(H)},V_n^{(H)}]=2(m-n)V_{m+n}^{(H)}, \quad m,n\geq -1,$$
and they are generated by $V_{-1}^{(H)}$ and $V_2^{(H)}$.
\item The Hodge tau-function is annihilated by the operator $\mathfrak{D}^{(H)}$,
\begin{equation}\label{dilaton3}
\mathfrak{D}^{(H)}\cdot \exp(F_H(u,q))=0,
\end{equation}
where
\begin{equation}
\mathfrak{D}^{(H)}=L_0+u\frac{\partial}{\partial u}-3\sum_{k=0}^{\infty}(-1)^k\binom{k+2}{k}u^{k}\frac{\partial}{\partial q_{k+3}}+\frac{1}{8}.
\end{equation}
\end{enumerate}
\end{Thm}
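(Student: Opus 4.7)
The plan is to exploit two structural observations: the subalgebra relation
\[
[V_m^{(H)},V_n^{(H)}]=2(m-n)V_{m+n}^{(H)}
\]
claimed in the theorem, and the recursive definition (\ref{def:phi}) of the building blocks $\widetilde{\phi_k}$ via the operator $\mathcal{L}:=L_{-2}+2uL_{-1}+u^2L_0-\tfrac{1}{2}q_1^2$. Once the commutator identity is verified as a formal-series statement, the family $\{V_m^{(H)}\}_{m\geq -1}$ is generated as a Lie algebra by $V_{-1}^{(H)}$ and $V_2^{(H)}$, so the Virasoro annihilation (\ref{maineq}) is reduced, by induction on $m$, to the two base cases $m=-1$ and $m=2$. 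Part (b) is treated separately, in parallel.

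For the commutator identity, one computes $[V_m^{(H)},V_n^{(H)}]$ using the standard Virasoro bracket $[L_i,L_j]=(i-j)L_{i+j}+\tfrac{i^3-i}{12}\delta_{i+j,0}$ together with the commutator of $L_i$ with the $\partial/\partial q_j$ terms in $V_n^{(H)}$. The resulting expression decomposes into three kinds of contributions: a double sum in $L_k$'s whose coefficient must match $2(m-n)\nu_k^{(m+n)}u^{k-2(m+n)}$, a single sum of first derivatives matching $2(m-n)\gamma_j^{(m+n)}$, and scalar anomalies responsible for the $-\tfrac{u^2}{24}\delta_{m,-1}$ and $\tfrac{1}{8}\delta_{m,0}$ corrections. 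Each identity will be translated into a closed manipulation of the generating series $(1+z)^2\eta^{2m+2}/z^2$ and $-z\eta^{2m+2}/(1+z)$, using the key relation $\eta\,\d\eta=\tfrac{z}{(1+z)^2}\d z$ obtained by differentiating (\ref{eta}), which encodes the Lambert $W$ structure.

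For the base cases, I would expand $V_{-1}^{(H)}\exp(F_H)$ and $V_2^{(H)}\exp(F_H)$, apply the derivatives termwise to the defining sum (\ref{FHuq}), and reorganize using the recursion (\ref{def:phi}). The idea is that each $L_i\widetilde{\phi_k}$ can be rewritten, via $[L_i,\mathcal{L}]$ and the Virasoro relations, as a combination of $\widetilde{\phi_{k-s}}$'s with lower index; the coefficients $\nu_i^{(m)}$ and $\gamma_j^{(m)}$ are manufactured precisely so that the contributions telescope to the classical Virasoro constraints of the Witten--Kontsevich free energy (the limit $u\to 0$, $\lambda_j\to 0$). The scalar shifts emerge from boundary terms of the telescoping, where $\widetilde{\phi_0}=q_1$ triggers string-- and dilaton--type anomalies on the Hodge intersection numbers.

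Part (b), the annihilation by $\mathfrak{D}^{(H)}$, reflects the degree grading $\deg q_k=k$, $\deg u=1$: the piece $L_0+u\,\partial/\partial u$ counts the total degree of each monomial in $F_H$, while the correction $-3\sum_{k\geq 0}(-1)^k\binom{k+2}{k}u^k\,\partial/\partial q_{k+3}$ absorbs the discrepancy between this degree and the dimension formula (\ref{condition}). Expanding $\mathfrak{D}^{(H)}\widetilde{\phi_k}$ using (\ref{def:phi}), the coefficients $\binom{k+2}{k}$ will be produced by the generating series $(1+u)^{-3}$, and the remaining identity reduces to the dilaton equation $\langle\tau_1\lambda_j\tau_{d_1}\cdots\tau_{d_n}\rangle=(2g-2+n)\langle\lambda_j\tau_{d_1}\cdots\tau_{d_n}\rangle$ for the linear Hodge integrals. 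The main obstacle I anticipate lies in step three: constructing a clean telescoping from the recursion (\ref{def:phi}) that explains why the seemingly opaque coefficients $\nu_i^{(m)}$, $\gamma_j^{(m)}$ are precisely what is needed to conjugate the Witten--Kontsevich Virasoro constraints through the Lambert--$W$-type change of variables implicit in (\ref{eta}).
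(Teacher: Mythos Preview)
Your approach is genuinely different from the paper's, and the difference is instructive. The paper does \emph{not} verify the commutator identity by direct computation, nor does it reduce to base cases $m=-1,2$ and attack those via the recursion (\ref{def:phi}). Instead, it invokes Theorem~\ref{KtoHodge}, the explicit relation $\exp(F_H)=e^{U}e^{P}\exp(F_K)$ with $U=\sum_{m>0}a_mu^mL_m$, and simply \emph{defines} $V_m^{(H)}$ as the conjugate $e^{U}\bigl(L_{2m}+P_m^{(2)}+\tfrac18\delta_{m,0}\bigr)e^{-U}$ of the (slightly modified) Kontsevich--Witten constraints. The annihilation and the bracket $[V_m^{(H)},V_n^{(H)}]=2(m-n)V_{m+n}^{(H)}$ are then automatic; all the work goes into computing the conjugate explicitly. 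The key technical device (Lemma~\ref{pmH+LmH} and Proposition~\ref{4}) is to read off the coefficients of $e^{U}L_{2m}e^{-U}$ and $e^{U}P_m^{(2)}e^{-U}$ from the one-variable power series $\frac{(1+z)^2}{z^2}\eta^{2m+2}$ and $-\frac{z}{1+z}\eta^{2m+2}$, which is exactly where the Lambert-$W$ structure enters. Part (b) is obtained the same way, starting from $u\partial_u\exp(F_K)=0$ and conjugating.

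Your plan has a real gap at the base case $m=2$. For $m=-1$ the Virasoro part of $V_{-1}^{(H)}$ is precisely $\mathcal{L}+\tfrac12 q_1^2$, so the recursion (\ref{def:phi}) does give a clean telescoping (this is the string-equation argument in Appendix~A.1). But $V_2^{(H)}$ is an infinite linear combination $\sum_{i\geq 4}\nu_i^{(2)}u^{i-4}L_i+\cdots$, and your proposal to handle each $L_i\widetilde{\phi_k}$ by commuting $L_i$ through $\mathcal{L}^k$ via $[L_i,\mathcal{L}]$ does not obviously close up: each such commutator produces $L_{i-2},L_{i-1},L_i$ and a quadratic piece, so iterating generates a proliferating tree rather than a telescoping sum, and you have given no mechanism by which the particular coefficients $\nu_i^{(2)},\gamma_j^{(2)}$ force cancellation down to a Witten--Kontsevich identity. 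In effect you would be rederiving, term by term, the content of the conjugation $e^{U}L_{4}e^{-U}$ without ever naming $U$; that is possible in principle but you have not supplied the organizing identity that makes it finite. The paper's route avoids this entirely: once you accept Theorem~\ref{KtoHodge}, no induction on $m$ and no base-case verification is needed, and the closed forms for $\nu_i^{(m)},\gamma_j^{(m)}$ drop out of the one-variable calculation.
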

We give two proofs of the above theorem. In Sect.\ref{Sec31} we give the first proof, using the Virasoro constraints for the Kontsevich-Witten tau-function and the main theorem in \cite{LW}. We introduce a method to compute the conjugation of Virasoro operators using power series in one variable $z$, and then prove Theorem \ref{VforHodge} without considering any knowledge of matrix model and Kac-Schwarz operators. Note that, after setting $u=0$, formula (\ref{maineq}) recovers the Virasoro constraints for the Kontsevich-Witten tau-function. The case $m=-1$ in formula (\ref{maineq}) will give us the {\bf string equation} for the linear Hodge integrals. And the case $m=0$ will lead to the statement (\emph{b}) in the above theorem, which implies the {\bf dilaton equation}. The expression of string and dilaton equations are well-known, and can be found in many papers such as \cite{GJV}. We will give a brief argument about how to obtain these two equations using Theorem \ref{VforHodge} in Appendix A.1.

Our first application of Theorem \ref{VforHodge} is to obtain the Virasoro constraints for the {\bf linear Hodge partition function} in our next result. The linear Hodge partition function $\exp(F_H(u,t))$ is a generating function for linear Hodge integral, where
\begin{equation}\label{Hodgeseries}
F_H(u,t)=\sum (-1)^j <\lambda_j \tau_0^{k_0}\tau_1^{k_1}\dots>u^{2j} \frac{t_0^{k_0}}{k_0!}\frac{t_1^{k_1}}{k_1!}\dots.
\end{equation}
The existence of Virasoro constraints for $\exp(F_H(u,t))$ has been confirmed in many papers, (see, for example, \cite{AMAM}, \cite{Z}), but these former results do not provide the Virasoro equations. Here, we present the Virasoro constraints for $\exp(F_H(u,t))$ in a simple equation form, where the coefficients involve a set of rational numbers $\{C_i\}$. These numbers appear as the coefficients of the Stirling's approximation of {\bf gamma function} $\Gamma(z)$:
\begin{equation}\label{gamma}
\Gamma(z)\sim z^{z-\frac{1}{2}}e^{-z}\sqrt{2\pi}\sum_{i=0}^{\infty}C_iz^{-i}.
\end{equation}
\begin{Cor}\label{VforPartition}
The linear Hodge partition function $\exp(F_H(u,t))$ is annihilated by the infinite sequence of operators $\{\widehat{V}_m^{(H)}\}$ (Virasoro constraints) as
\begin{equation*}
\widehat{V}_m^{(H)}\cdot \exp(F_H(u,t))=0, \quad (m\geq -1)
\end{equation*}
where
\begin{multline}\label{SimpleExpression}
\widehat{V}_m^{(H)}=\sum_{k= m}^{\infty}\sum_{i=0}^{k-m}c_i^{(k,m)}u^{2(m+i)}t_{k-m-i}\frac{\partial}{\partial t_k}\\
+\frac{1}{2}\sum_{n=0}^{\infty}(-u^2)^{m+n-1}\sum_{k=0}^{m+n-1}c_n^{(k,m)}\frac{\partial^2}{\partial t_k\partial t_{m+n-1-k}}\\
-\sum_{k=m+1}^{\infty}c_{k-m-1}^{(k,m)}u^{2(k-m-1)}\frac{\partial}{\partial t_k}-\frac{u^2}{24}\delta_{m,-1}+\frac{1}{8}\delta_{m,0},
\end{multline}
and
\begin{equation}\label{c}
c_n^{(k,m)}=\sum_{i=0}^n(-1)^i\frac{(2k-2i+1)!!}{(2k-2m-2i-1)!!}C_iC_{n-i}.
\end{equation}
Here we use the notation $(-2n-1)!!=\frac{(-1)^n}{(2n-1)!!}$.  The operators $\{\widehat{V}_m^{(H)}\}$ form a subalgebra of the Virasoro algebra, namely,
$$[\widehat{V}_m^{(H)},\widehat{V}_n^{(H)}]=2(m-n)\widehat{V}_{m+n}^{(H)}, \quad m,n\geq -1,$$
and they are generated by $\widehat{V}_{-1}^{(H)}$ and $\widehat{V}_2^{(H)}$.
\end{Cor}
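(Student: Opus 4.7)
The plan is to deduce the Virasoro constraints on the partition function $\exp(F_H(u,t))$ directly from Theorem \ref{VforHodge} via the change of variables that links the two generating functions. Comparing (\ref{FHuq}) and (\ref{Hodgeseries}), one reads off the identity $F_H(u,q) = F_H(u,t)\big|_{t_k = \widetilde{\phi_k}(u,q)}$. The recursion (\ref{def:phi}) starting from $\widetilde{\phi_0}=q_1$ shows that the map $(q_1,q_2,\ldots) \mapsto (t_0,t_1,\ldots)$ with $t_k = \widetilde{\phi_k}(u,q)$ is triangular, with leading term $t_k = q_{2k+1} + (\text{lower order})$, so it can be inverted formally in the space of power series. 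First I would extract, from (\ref{def:phi}), enough of the linear and quadratic parts of $\widetilde{\phi_k}$ in the $q$-variables (as power series in $u$) to apply the chain rule
$$\frac{\partial}{\partial q_i} = \sum_{k\geq 0}\frac{\partial \widetilde{\phi_k}}{\partial q_i}\frac{\partial}{\partial t_k}$$
and to invert the substitution at the level of those coefficients that interact with the Virasoro generators $L_i$.

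Next I would substitute into each $L_i$ and then into $V_m^{(H)}$. The transport term $q_k\partial/\partial q_{k+i}$ becomes a combination of $t_{k-m-i}\partial/\partial t_k$ with specific powers of $u$; the quadratic derivative $\partial^2/(\partial q_a\partial q_b)$ becomes combinations of $\partial^2/(\partial t_a\partial t_b)$; and the multiplication term $q_iq_j$ contributes both constants and $\partial/\partial t_k$-terms (the latter producing the inhomogeneous $-c_{k-m-1}^{(k,m)}u^{2(k-m-1)}\partial/\partial t_k$ shift visible in $\widehat V_m^{(H)}$). Collecting everything with the coefficients $\nu_i^{(m)},\gamma_j^{(m)}$ from $V_m^{(H)}$ yields an explicit, if unwieldy, expression for the transformed operator in $t$-variables.

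The heart of the argument is to identify these transformed coefficients with the $c_n^{(k,m)}$ defined by (\ref{c}). The key input is that the series $\eta$ of (\ref{eta}) is essentially the saddle-point variable for $\Gamma(z)$: the implicit substitution $\log(1+z)-z/(1+z) = y - \log(1+y)$ identifies $\eta^2/2$ with the standard Stirling potential $y - \log(1+y)$. Rewriting the residues $\nu_i^{(m)} = [z^i]\eta^{2m+2}(1+z)^2/z^2$ as contour integrals in the variable $\eta$ using the Jacobian $dz/d\eta = \eta(1+z)^2/z$ (which is precisely the combination appearing in $V_m^{(H)}$), and applying the Gaussian moment formula $\int \eta^{2k}e^{-\eta^2/2}\d\eta = \sqrt{2\pi}(2k-1)!!$, converts sums of $\nu_i^{(m)}$-weighted contributions into sums weighted by the Stirling coefficients $C_i$ of (\ref{gamma}). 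The double-indexed coefficient (\ref{c}) then emerges from the convolution $C_iC_{n-i}$ paired with the double-factorial ratios coming from the Gaussian moments. I expect this Stirling/saddle-point identification to be the main technical step.

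Once the transformed operator is shown to equal $\widehat V_m^{(H)}$, the remaining assertions follow immediately: the commutator $[\widehat V_m^{(H)},\widehat V_n^{(H)}]=2(m-n)\widehat V_{m+n}^{(H)}$ and the generation of the subalgebra by $\widehat V_{-1}^{(H)}$ and $\widehat V_2^{(H)}$ both descend from the corresponding statements in Theorem \ref{VforHodge}, since conjugation by the triangular change of variables $t_k = \widetilde{\phi_k}(u,q)$ preserves all algebraic identities among the operators.
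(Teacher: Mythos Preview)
Your overall strategy---pull back $V_m^{(H)}$ through the change of variables $t_k=\widetilde{\phi_k}(u,q)$---is exactly what the paper does. There is, however, a misattribution in your outline: the inhomogeneous shift $-\sum_{k>m}c_{k-m-1}^{(k,m)}u^{2(k-m-1)}\partial/\partial t_k$ does \emph{not} come from the $q_iq_j$ part of the Virasoro operators (that part appears only in $L_{-2}$, hence only for $m=-1$, and produces the $t_0^2/2$ constant). It comes from the first-order piece $P_m^{(H)}=\sum_j \gamma_j^{(m)} j u^{j-2m-3}\partial/\partial q_j$ of $V_m^{(H)}$: one computes $[P_m^{(H)},t_k]$, which is a constant, and this constant is $-c_{k-m-1}^{(k,m)}u^{2(k-m-1)}$. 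You should reorganize the bookkeeping accordingly.

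On the key technical step, the paper takes a more algebraic route than your proposed saddle-point/Gaussian-moment argument. It introduces $f=\bigl(\eta(z^{-1})\bigr)^{-1}$ and proves the two reciprocal identities
\[
\bigl(f^{2n+1}\bigr)_+=\frac{1}{(2n-1)!!}\sum_{i=0}^n C_i\,\phi_{n-i}(z),\qquad
\phi_n(z)=\sum_{i=0}^n(-1)^i(2n-2i-1)!!\,C_i\,\bigl(f^{2n-2i+1}\bigr)_+,
\]
where $\phi_n(z)=D^n z$ with $D=(1+z)^2 z\,\d/\d z$. These are precisely the algebraic incarnation of the Stirling asymptotics you invoke, and they make the convolution $C_iC_{n-i}$ in (\ref{c}) fall out immediately: for the first-order part one computes $\bigl(f^{-2m-2}D\phi_k(z)\bigr)_+$ using the second identity and then the first, and the double-factorial ratios appear from $Df=f^3$. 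For the second-order part the paper passes to the coordinate $y=\tfrac12 f^{-2}$ (so $D=-\partial_y$) and extracts a residue in $y$, which again reduces to products of $C_i$'s. Your contour-integral picture is morally the same object, but making it rigorous would essentially force you to rediscover these two identities; it is cleaner to prove them directly and work algebraically, as the paper does.
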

In Sect.\ref{Sec32}, we prove this corollary.  And we will see that the strategy used in \cite{AMAM} and \cite{Z}, which is based on formula (\ref{1}) (cf. \cite{FP}, \cite{AG1}) derived from Mumford's theorem \cite{Mu}, will eventually result in our expression (\ref{SimpleExpression}). The numbers $\{C_i\}$ have some other interesting properties, which will be introduced in Sect.\ref{Sec2}.

It is commonly assumed that the generating function of enumerative problems enjoy similar properties which are related to each other, such as Virasoro constraints, integrable hierarchy, cut-and-join type equation and Eynard-Orantin recursion relation. For example, the Eynard-Orantin recursion for the $\psi$-class intersection numbers is exactly the DVV-formula \cite{DVV} for the intersection numbers, which is equivalent to the Virasoro constraints for the Kontsevich-Witten generating function. And other examples including simple Hurwitz numbers, Mirzakhani's recursion relation for the Weil-Petersson volumes of the moduli spaces of bordered Riemann surfaces \cite{MMBS} and Grothendieck's dessins d'enfants \cite{MZo} indicate the same phenomenon. 

In the case of linear Hodge integrals, the recursion relation has been established in \cite{MZ} by calculating the Laplace transform of the cut-and-join equation of the Hurwitz numbers \cite{MK}, (see Eq.(\ref{cutandjoin})). We present this recursion formula as Eq.(\ref{recursion}) in Sect.\ref{Sec4}. Furthermore, the direct image of the recursion formula by the projection map of the spectral curve on $\C$ is exactly the Bouchard-Mari\~no conjecture \cite{EMS}, which has been solved in \cite{BEMS} using matrix integral representation of simple Hurwitz numbers, and in \cite{EMS},\cite{MZ} using the recursion formula. For the Hodge tau-function, let us recall the following theorem from \cite{MK}, which is also derived from the cut-and-join equation of the simple Hurwitz numbers. 

\begin{Thm}[Kazarian, \cite{MK}] \label{Kazarian}
The tau-function $\exp(F_H(u,q))$ is subject to the differential equation
\begin{equation}\label{Kazariancutandjoin}
\frac{1}{3}u^{-2}\frac{\partial}{\partial u} \exp(F_H(u,q))=\widetilde{M}\cdot \exp(F_H(u,q)),
\end{equation}
where
\begin{multline}\label{Kazarianoperator}
\widetilde{M}=\vphantom{\frac{1}{2}}M_0+4u^{-1}M_{-1}+6u^{-2}M_{-2}+4u^{-3}M_{-3}+u^{-4}M_{-4}  \\
-\frac{4}{3}u^{-3}L_0-u^{-4}L_{-1}+\frac{1}{4}u^{-2}q_2+\frac{1}{3}u^{-3}q_3+\frac{1}{8}u^{-4}q_4.
\end{multline}
Here $M_k$ is the cut-and-join type operator. For $k\leq 0$, it is in the form
\begin{multline*}
M_k=\frac{1}{2}\sum_{\substack{i,j>0\\i+j+k>0}}(i+j+k)q_iq_j\frac{\partial}{\partial q_{i+j+k}}\\
+\frac{1}{2}\sum_{\substack{i,j>0\\i+j-k>0}}ijq_{i+j-k}\frac{\partial^2}{\partial q_i\partial q_j}
+\frac{1}{6}\sum_{\substack{i,j>0\\-k-i-j>0}}q_iq_jq_{-k-i-j},
\end{multline*}
and $L_i$ is the Virasoro operator.
\end{Thm}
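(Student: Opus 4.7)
The plan is to derive Kazarian's equation by translating the Goulden--Jackson cut-and-join equation for simple Hurwitz numbers via the ELSV formula. First I would set up the Hurwitz generating function $\mathcal{H}(\beta,p)=\sum_{g,\mu} h_{g,\mu}\,\beta^{b(g,\mu)}\,p_\mu/|\mathrm{Aut}(\mu)|$ in power-sum variables $p_k$, where $h_{g,\mu}$ counts degree-$|\mu|$ branched covers of $\P^1$ with profile $\mu$ over $\infty$ and $b=2g-2+\ell(\mu)+|\mu|$ simple branch points, and $\beta$ tracks $b$. The classical cut-and-join equation then reads $\partial_\beta\mathcal{H}=M_0\cdot\mathcal{H}$ in the paper's notation, which serves as the starting point on the Hurwitz side.

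The next step is to apply the ELSV formula, which expresses $h_{g,\mu}$ as a combinatorial prefactor times a linear Hodge integral on $\overline{M}_{g,\ell(\mu)}$, and to perform a change of variables of the form $p_k\mapsto q_k\cdot u^{a_k}/k!$, with the exponents $a_k$ dictated by matching the ELSV prefactors against the definition (\ref{FHuq}) of $F_H(u,q)$, so as to identify $\mathcal{H}$ with $\exp(F_H(u,q))$ up to an explicit factor. Under this substitution the dimension condition (\ref{condition}) forces $\beta$ and $u^{-2}$ to play interchangeable roles, so the left-hand derivative $\partial_\beta$ becomes proportional to $u^{-2}\partial_u$, with the prefactor $\tfrac{1}{3}$ coming from the Euler-type scaling. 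On the right-hand side, the cut-and-join operator transforms into the binomial combination $\sum_{k=0}^{4}\binom{4}{k}u^{-k}M_{-k}$: the coefficients $1,4,6,4,1$ arise because the change of variables shifts each of the four internal indices of the cut-and-join operator by a $u^{-1}$-term, and regrouping by total shift reproduces this fourth-power binomial pattern.

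The main obstacle is accounting for the correction terms $-\tfrac{4}{3}u^{-3}L_0-u^{-4}L_{-1}+\tfrac{1}{4}u^{-2}q_2+\tfrac{1}{3}u^{-3}q_3+\tfrac{1}{8}u^{-4}q_4$. The linear terms in $q_2,q_3,q_4$ originate from the unstable strata $(g,n)\in\{(0,1),(0,2)\}$, which the ELSV formula does not cover and whose contribution must be inserted by hand from the low-degree genus-zero Hurwitz numbers; their specific rational coefficients $\tfrac{1}{4},\tfrac{1}{3},\tfrac{1}{8}$ reflect the small-degree evaluations of $h_{0,(2)}, h_{0,(3)}, h_{0,(4)}$. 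The operators $-\tfrac{4}{3}u^{-3}L_0$ and $-u^{-4}L_{-1}$ arise because $\partial_u$ does not commute with the $u$-dependent change of variables: conjugating $\partial_u$ by the substitution $p\mapsto q$ produces Euler-type and translation-type corrections on the $q$-side, which are precisely the Virasoro operators $L_0$ and $L_{-1}$. Carrying out this conjugation bookkeeping without losing or double-counting any term, and matching the unstable and conjugation corrections to the stated coefficients, is the step that would require the most care.
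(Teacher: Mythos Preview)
Your strategy is essentially Kazarian's original proof from \cite{MK}: start from the Hurwitz cut-and-join equation $\partial_\beta e^H=M_0\cdot e^H$, pass through ELSV to identify the stable part of $H$ with $F_H$, and track how both sides transform. This is a valid route, and the identification $u=\beta^{1/3}$ giving $\partial_\beta=\tfrac{1}{3}u^{-2}\partial_u$ is correct. One point, however, is oversimplified: the passage from Hurwitz to Hodge variables is not a monomial rescaling $p_k\mapsto q_k\,u^{a_k}/k!$ but a triangular linear substitution, implemented by an operator of the form $\exp\bigl(-\sum_{m\ge1} a_{-m}\beta^m X_{-m}\bigr)$ composed with the Euler rescaling $\exp(\tfrac{4}{3}\log\beta\,L_0)$. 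It is the conjugation of $M_0$ by this exponential of genuinely off-diagonal first-order operators that produces the factor $(1+z)^4/z^4$ and hence the binomial string $1,4,6,4,1$; a pure rescaling commutes with $M_0$ up to scalars and would not mix the $M_k$'s at all. The unstable pieces $H_{0,1}$ and $H_{0,2}$ must likewise be pushed through this conjugation, so the bookkeeping for the $q_2,q_3,q_4$ and $L_0,L_{-1}$ corrections is more delicate than your sketch indicates.

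The paper, by contrast, proves Kazarian's formula without any reference to Hurwitz numbers or ELSV. The argument in Section~\ref{Sec42} runs in the opposite logical direction: starting from the Virasoro constraints of Theorem~\ref{VforHodge} (themselves obtained from the Kontsevich--Witten constraints via the operator $e^Ue^P$ of Theorem~\ref{KtoHodge}), one observes that $e^{F_K}$ satisfies $(M_{-4}-L_{-1}+\tfrac{1}{8}q_4)\cdot e^{F_K}=0$, conjugates this identity by $e^{\widetilde P}e^U$ (Propositions~\ref{F2a} and~\ref{F2}), and combines the result with the dilaton-type equation~\eqref{dilaton3} and the first-order annihilators $E^{(m)}$ of Proposition~\ref{q2m} to obtain~\eqref{Kazariancutandjoin}. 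Your approach recovers Kazarian's formula as a consequence of Hurwitz theory; the paper recovers it as a consequence of the Virasoro constraints, which is exactly what is needed for Theorem~\ref{recursionequi}, where the equivalence of the Virasoro constraints, Kazarian's formula, and the polynomial recursion is established independently of Hurwitz numbers.
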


Naturally, it is considered (mentioned in \cite{AE}) that the recursion relation for the linear Hodge integrals should be equivalent to its Virasoro constraints. In Sect.\ref{Sec4}, we will establish this equivalence relation by proving the following result.

\begin{Thm}\label{recursionequi} 
The following three different types of descriptions for linear Hodge integrals are equivalent:
\begin{enumerate}[(a)]
   \item The polynomial recursion formula for the linear Hodge integrals;
 \item Kazarian's formula \eqref{Kazariancutandjoin} in Theorem \ref{Kazarian} ;
    \item  Virasoro constraints \eqref{maineq} and Eq.\eqref{dilaton3} in Theorem \ref{VforHodge}.
\end{enumerate}
\end{Thm}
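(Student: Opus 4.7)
The plan is to prove the theorem by establishing the cycle (a) $\Leftrightarrow$ (b) $\Leftrightarrow$ (c): first I would invoke the Laplace transform correspondence already developed in the literature to handle (a) $\Leftrightarrow$ (b), and then treat (b) $\Leftrightarrow$ (c) by explicit algebraic manipulation of the Virasoro operators and the coefficients coming from $\eta$.

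For (a) $\Leftrightarrow$ (b), the polynomial recursion of \cite{MZ} was derived as the Laplace transform of the cut-and-join equation for simple Hurwitz numbers, and Kazarian's equation \eqref{Kazariancutandjoin} is precisely the same cut-and-join equation lifted to the tau-function level via definition \eqref{FHuq}. I would make the correspondence explicit by translating between intersection numbers $\langle \lambda_j \tau_{d_1}\cdots\tau_{d_n}\rangle$ and the coefficients of $F_H(u,q)$, then verifying term-by-term that the Laplace-transformed recursion matches the combination of $M_k$, $L_k$ and $q_k$'s appearing in $\widetilde{M}$. Since the Laplace transform is injective on the relevant space, the two formulations are equivalent.

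For (b) $\Rightarrow$ (c), the key observation is that $\{V_m^{(H)}\}$ is a Virasoro subalgebra generated by $V_{-1}^{(H)}$ and $V_2^{(H)}$, so it suffices to prove the string equation ($m=-1$), the dilaton equation \eqref{dilaton3}, and the single constraint $V_2^{(H)}\cdot\exp(F_H)=0$; the rest then follows by commutators $[V_m^{(H)},V_n^{(H)}]=2(m-n)V_{m+n}^{(H)}$ applied to $\exp(F_H)$. The string and dilaton equations are the classical consequences of \eqref{Kazariancutandjoin} discussed in Appendix A.1. For $V_2^{(H)}$, I would substitute the explicit power series $\nu_i^{(2)}$ and $\gamma_j^{(2)}$ (coming from $\eta^{6}$) and extract the relevant Laurent coefficient in $u$ from Kazarian's equation, using the dilaton equation \eqref{dilaton3} to eliminate the $u^{-2}\partial/\partial u$ term in \eqref{Kazariancutandjoin} in favor of $L_0$ and the $\partial/\partial q_{k+3}$'s (recall that the $V_m^{(H)}$ contain no $u$-derivative).

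For (c) $\Rightarrow$ (b), I would assemble Kazarian's operator $\widetilde{M}$ as a combination $\sum_{m\geq -1}\alpha_m(u)\,V_m^{(H)} + \beta(u)\,\mathfrak{D}^{(H)}$. The binomial coefficients $(1,4,6,4,1)$ multiplying $M_0,M_{-1},\dots,M_{-4}$ in \eqref{Kazarianoperator}, together with the quadratic structure $L_{-2}+2uL_{-1}+u^2L_0$ appearing in the iteration \eqref{def:phi}, strongly suggest that $\widetilde{M}$ assembles from a squaring-type combination of the Virasoro pieces that already define $\widetilde{\phi_k}$. The main obstacle is precisely this bookkeeping: one must prove explicit identities among the generating series $\nu_i^{(m)}$, $\gamma_j^{(m)}$ (built from $\eta^{2m+2}$, and so governed by the Lambert $W$ function via \eqref{eta}) and the concrete coefficients of $M_k$ and $L_i$ inside $\widetilde{M}$. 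I expect these identities to reduce to manageable closed-form manipulations of power series in $z$ via the defining relation for $\eta$, but matching them operator-by-operator will require substantial care.
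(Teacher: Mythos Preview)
Your overall cycle matches the paper's, but there is a genuine gap in the direction (c) $\Rightarrow$ (b), and a circularity in (b) $\Rightarrow$ (c).

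For (c) $\Rightarrow$ (b), you propose writing $\widetilde{M}=\sum_{m\ge -1}\alpha_m(u)\,V_m^{(H)}+\beta(u)\,\mathfrak{D}^{(H)}$ with scalar coefficients. This is impossible: the operators $M_k$ inside $\widetilde{M}$ are \emph{cubic} in the Heisenberg generators (they contain terms $q_iq_j\partial_{q_{i+j+k}}$, $q_{i+j-k}\partial_{q_i}\partial_{q_j}$, $q_iq_jq_{-k-i-j}$), whereas every $V_m^{(H)}$ and $\mathfrak{D}^{(H)}$ is built only from Virasoro operators $L_i$ (quadratic) and first-order pieces. No scalar linear combination can produce the cubic part. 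The paper's mechanism is different: it starts from the identity $M_{-4}=\sum_{m\ge -1} q_{2m+4}L_{2m}+\sum_{k\ge 1}L^{\text{odd}}_{-2k-4}\,(2k\partial_{q_{2k}})$ and conjugates by $e^{\widetilde P}e^{U}$. The result (Proposition~\ref{F2a}) expresses the conjugate as a sum with \emph{operator-valued} coefficients multiplying the $V_m^{(H)}$, together with extra terms multiplying the first-order constraints $E^{(k)}$ of Proposition~\ref{q2m}. Both ingredients---operator coefficients and the auxiliary $E^{(k)}$ constraints---are missing from your ansatz, and without them the argument cannot close.

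For (b) $\Rightarrow$ (c), you appeal to Appendix~A.1 for the string and dilaton equations as ``classical consequences of \eqref{Kazariancutandjoin}''. But Appendix~A.1 derives them from Theorem~\ref{VforHodge}, i.e.\ from (c), not from (b); invoking it here is circular. Moreover, your plan for $V_2^{(H)}$ presupposes the dilaton equation to eliminate $u^{-2}\partial_u$, so you would need an independent derivation of \eqref{dilaton3} from Kazarian's formula first. The paper instead bypasses all of this via the single commutator identity of Proposition~\ref{epsiloncommutator}: $V_{m-2}^{(H)}=-\tfrac12\bigl[\widetilde{M}-\tfrac13u^{-2}\partial_u,\;u^{2m}E^{(m)}\bigr]$. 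Since both $\widetilde{M}-\tfrac13u^{-2}\partial_u$ (by Kazarian) and $E^{(m)}$ (by Proposition~\ref{q2m}) annihilate $\exp(F_H)$, so does their commutator, yielding \emph{all} the Virasoro constraints at once; the dilaton equation is then recovered afterwards.

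For (a) $\Leftrightarrow$ (b), your Laplace-transform route through the Hurwitz cut-and-join equation is valid and is acknowledged in the paper, but the paper deliberately gives a self-contained proof (Sect.~\ref{Sec41}) by directly matching the action of each piece $A_1,\dots,A_4$ of \eqref{re-arrange} to the corresponding terms of the recursion \eqref{recursion}, avoiding Hurwitz theory entirely.
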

In our proof, we do not require any knowledge of Hurwitz number and its cut-and-join equation. In Sect.\ref{Sec41} we show that the recursion formula is equivalent to Kazarian's formula \eqref{Kazariancutandjoin}. Then, in Sect.\ref{Sec42}, we prove Kazarian's formula directly using Theorem \ref{VforHodge}. Thus, we have obtained an independent proof of the recursion formula. As a completion of establishing the equivalence relation, we give the second proof of Theorem \ref{VforHodge} in Sect.\ref{Sec43} using the following result, which allows us to produce the Virasoro constraints for the Hodge tau-function using Kazarian's formula. Note that we will use the coefficient operator $[z^i]$ throughout our context. That is,
$$F=\sum A_iz^i \quad\rightarrow \quad [z^i]F =A_i.$$
\begin{Prop}\label{epsiloncommutator}
For $m\geq 1$, let
\begin{equation}\label{epsilon}
E^{(m)} = \sum_{j=2m}^{\infty}\left[z^j\right](\eta^{2m})ju^{j-2m}\frac{\partial}{\partial q_j}.
\end{equation}
Then, we have
\begin{equation*}
V_{m-2}^{(H)}=-\frac{1}{2}[\widetilde{M}-\frac{1}{3}u^{-2}\frac{\partial}{\partial u}, u^{2m}E^{(m)} ].
\end{equation*}
\end{Prop}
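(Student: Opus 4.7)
The plan is to compute the commutator on the right-hand side directly and identify the result with $V_{m-2}^{(H)}$ by matching generating functions in the auxiliary variable $z$. Write $u^{2m}E^{(m)}=\sum_{j\geq 2m}\mu_j^{(m)} j u^j \frac{\partial}{\partial q_j}$ with $\mu_j^{(m)}=[z^j]\eta^{2m}$, so that $u^{2m}E^{(m)}$ is manifestly a first-order derivation in the $q$-variables with explicit $u$-dependent coefficients. Next I would decompose $\widetilde{M}-\frac{1}{3}u^{-2}\frac{\partial}{\partial u}$ into four kinds of pieces: the cut-and-join operators $\binom{4}{k}u^{-k}M_{-k}$ for $k=0,1,2,3,4$, the Virasoro terms $-\frac{4}{3}u^{-3}L_0-u^{-4}L_{-1}$, the linear-in-$q$ pieces $\frac{1}{4}u^{-2}q_2+\frac{1}{3}u^{-3}q_3+\frac{1}{8}u^{-4}q_4$, and the remaining $-\frac{1}{3}u^{-2}\frac{\partial}{\partial u}$.

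The technical heart of the proof is the clean identity $\bigl[M_{-k},\frac{\partial}{\partial q_j}\bigr]=-L_{j-k}$, which one verifies by applying the Leibniz rule to each of the quadratic-in-$q$, linear-in-$q$, and cubic-in-$q$ pieces of $M_{-k}$ and comparing the three resulting sums with the three pieces of $L_{j-k}$ in the definition of the Virasoro operator. Using this identity and reindexing by $i=j-k$, the cut-and-join contribution assembles into a sum of Virasoro operators $L_i$ weighted by explicit linear combinations of $\mu_{i+k}^{(m)}$ with binomial coefficients. The Virasoro and derivative pieces give first-order-in-$\partial$ contributions via $\bigl[L_0,\frac{\partial}{\partial q_j}\bigr]=-j\frac{\partial}{\partial q_j}$, $\bigl[L_{-1},\frac{\partial}{\partial q_j}\bigr]=-(j-1)\frac{\partial}{\partial q_{j-1}}$, and $\bigl[u^{-2}\frac{\partial}{\partial u},u^{2m}E^{(m)}\bigr]=\sum_j j^2\mu_j^{(m)}u^{j-3}\frac{\partial}{\partial q_j}$, while the linear-in-$q$ pieces produce pure constants through $\bigl[q_r,u^{2m}E^{(m)}\bigr]=-r\mu_r^{(m)}u^r$.

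Once the assembled commutator is in hand, it remains to match its coefficients against $V_{m-2}^{(H)}=\sum_i\nu_i^{(m-2)}u^{i-2m+4}L_i+\sum_j\gamma_j^{(m-2)}ju^{j-2m+1}\frac{\partial}{\partial q_j}$ plus the anomalous constants. This reduces to two power-series identities in $z$: one asserting that the collected coefficient of $L_i$ equals $\nu_i^{(m-2)}=[z^i]\frac{(1+z)^2}{z^2}\eta^{2m-2}$, and another asserting that the collected coefficient of $\frac{\partial}{\partial q_j}$ equals $\gamma_j^{(m-2)}=[z^j]\bigl(-\frac{z}{1+z}\bigr)\eta^{2m-2}$, in each case with the correct compensating powers of $u$. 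The main obstacle is the verification of these two identities; both follow from the defining relation $\eta^2=2\log(1+z)-2z/(1+z)$, which upon differentiation yields $\eta\eta'=z/(1+z)^2$, hence $(\eta^{2m})'=2mz\eta^{2m-2}/(1+z)^2$. This differential identity provides the link between the generating series for $\mu_j^{(m)}$ and the series $\frac{(1+z)^2}{z^2}\eta^{2m-2}$, while a parallel algebraic manipulation extracts the factor $-\frac{z}{1+z}\eta^{2m-2}$ controlling $\gamma_j^{(m-2)}$. The anomalous constants $-u^2\delta_{m,1}/24$ and $\delta_{m,2}/8$ then follow from direct evaluation of the $q_r$-contributions with the low-order values of $\mu_j^{(m)}$ at $m=1$ and $m=2$.
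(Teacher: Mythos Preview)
Your approach is essentially the same as the paper's: a direct commutator computation after decomposing $\widetilde{M}-\frac{1}{3}u^{-2}\partial_u$ into its cut-and-join, Virasoro, linear-in-$q$, and $\partial_u$ components, using the key identity $[M_{-k},\partial_{q_j}]=-L_{j-k}$, and then matching coefficients against $V_{m-2}^{(H)}$ through power-series identities rooted in $\eta\eta'=z/(1+z)^2$.

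The only organizational difference is that the paper pairs $-\tfrac{1}{3}u^{-2}\partial_u$ with $-\tfrac{1}{3}u^{-3}L_0$ and observes that this combination commutes with $u^{2m}E^{(m)}$ (since $[u^{-2}\partial_u,u^j\partial_{q_j}]$ and $[u^{-3}L_0,u^j\partial_{q_j}]$ both yield $\pm j\,u^{j-3}\partial_{q_j}$ and cancel). This leaves the cleaner pair $-u^{-3}L_0-u^{-4}L_{-1}$ to produce the $\gamma_j^{(m-2)}$ contribution directly. You instead keep $-\tfrac{4}{3}u^{-3}L_0$ together with $-u^{-4}L_{-1}$ and treat $-\tfrac{1}{3}u^{-2}\partial_u$ separately; the two organizations of course give the same total, but the paper's grouping spares you from having to recombine the $j^2\mu_j^{(m)}$ terms later. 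Either way the generating-function identity you isolate, $(\eta^{2m})'=2m z\eta^{2m-2}/(1+z)^2$, is exactly what drives both the $\nu_i^{(m-2)}$ and $\gamma_j^{(m-2)}$ matches.
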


\section{The tau-functions and their relations}\label{Sec2}
The intersections of the $\psi$-classes are evaluated by the integral:
$$\left<\tau_{d_1}\dots \tau_{d_n}\right>=\int_{\overline{M}_{g,n}}\psi_1^{d_1}\dots \psi_n^{d_n}.$$
The Kontsevich-Witten generating function in variables $t_k$ is defined as
$$F_K(t)=\sum <\tau_0^{k_0}\tau_1^{k_1}\dots>\frac{t_0^{k_0}}{k_0!}\frac{t_1^{k_1}}{k_1!}\dots.$$
From the linear Hodge partition function (\ref{Hodgeseries}), we can see that $F_H(0,t)=F_K(t)$. Now let
$$F_K(q)=F_K(t)|_{t_k=(2k-1)!!q_{2k+1}}.$$
It is well-known that $\exp{(F_K(q))}$ is a tau-function for the KdV hierarchy \cite{K}.

The function $F_H(u,q)$ defined in (\ref{FHuq}) can be obtained from $F_H(u,t)$ after the a change of variables $t_k=\widetilde{\phi_k}(u,q)$, that is,
$$F_H(u,q)=F_H(u,t)|_{t_k=\widetilde{\phi_k}(u,q)}.$$
The polynomials $\widetilde{\phi_k}(u,q)$ can also be described in the following way (cf. \cite{MK}). Let
\begin{equation}\label{D}
\widehat{D}=(u+z)^2z\frac{\partial}{\partial z}.
\end{equation}
Consider the following sequence of polynomials $\phi_k(u,z)$:
\begin{equation}\label{phiuz}
\phi_0(u,z)=z,\quad\quad \phi_{k}(u,z)=\widehat{D}^{k}\cdot z.
\end{equation}
Then $\widetilde{\phi_k}(u,q)$ can be obtained by replacing $z^m$ with $q_m$ in $\phi_k(u,z)$, and they satisfy Eq.(\ref{def:phi}). 
If we set $u=0$, the polynomial $\widetilde{\phi_k}(u,q)$ only has the term $(2k-1)!!q_{2k+1}$, and $\exp(F_H(0,q))$ is exactly the Kontsevich-Witten tau-function $\exp(F_K(q))$.

Next, we present some known relations between the tau-functions, which will be used to construct the Virasoro constraints in the next section. First we recall some important power series and results that have been introduced in the paper \cite{LW}. Let $h$ be the series defined as
\begin{equation}\label{h}
h=(\sum_{i=0}^{\infty} (-1)^{i}b_iz^i)^{-1} -1.
\end{equation}
where the numbers $\{b_i\}$ are uniquely determined by the following relation,
\begin{equation*}
(n+1)b_n=b_{n-1}-\sum_{k=2}^{n-1}kb_kb_{n+1-k}
\end{equation*}
with $b_1=1, b_2=1/3$. 

\vspace{6pt}
\noindent
{\bf Remark}: For the numbers $\{C_i\}$, we have, $C_i=(2i+1)!!b_{2i+1}$, \cite{MM}. From the stirling's approximation of gamma function, we have another presentation:
$$\log (\Gamma (z)) \sim \left(z-\tfrac{1}{2}\right)\log(z) -z + \tfrac{1}{2}\log(2 \pi) + \sum_{k=1}^\infty \frac{B_{2k}}{2k(2k-1)}z^{-2k+1},$$
which implies that
$$\exp\left(\sum_{k=1}^\infty \frac{B_{2k}}{2k(2k-1)}z^{-2k+1}\right) =\sum_{i=0}^{\infty}C_iz^{-i}.$$
Here $B_{2k}$'s are the Bernoulli numbers defined by:
$$\frac{t}{e^t-1}=\sum_{m=0}^{\infty} B_m\frac{t^m}{m!}.$$
Furthermore, for $i\geq 1$, $C_i$ can also be explicitly expressed by the following formula
$$C_i=\sum_{k=1}^{2i} (-1)^k \frac{d_3(2i+2k,k)}{2^{i+k}(i+k)!},$$
where $d_3(n,k)$ is defined to be the number of permutations on $n$ elements consisting of $k$ permutation cycles with length greater than or equal to $3$, (see \cite{LC}).

The series $h$ is a solution to the following equation
\begin{equation}\label{hLambert}
\frac{1}{1+h}e^{-\frac{1}{1+h}}=e^{-\frac{1}{2}z^2-1}.
\end{equation}
For $m>0$, let $\{a_m\}$ be the set of numbers determined by the following equation
\begin{equation}\label{ham}
\exp(\sum_{m>0}a_mz^{1+m}\frac{\d}{\d z})\cdot z=h.
\end{equation}
If we let the inverse function of $h$ be $\eta$, then
\begin{equation}\label{def:eta}
\eta=\exp(-\sum_{m>0}a_mz^{1+m}\frac{\d}{\d z})\cdot z,
\end{equation}
And, by Eq.(\ref{hLambert}), we can see that the expression of $\eta$ is exactly Eq.(\ref{eta}).

For $m>0$, the operators $L_m$ and the numbers $\{a_m\}$ appear in the following relation between Hodge tau-function and Kontsevich-Witten tau-function, \cite{LW}.
\begin{Thm}\label{KtoHodge}
For the two tau-functions $\exp(F_H(u,q))$ and $\exp(F_K(q))$, we have
\begin{equation}\label{KtoHodgeequation}
\exp(F_H(u,q))=\exp(U)\exp(P)\cdot\exp{(F_K(q))}.
\end{equation}
where
\begin{equation*}
U = \sum_{m>0} a_mu^mL_m, \quad P = -\sum_{k=1}^{\infty} b_{2k+1}u^{2k}\frac{\partial}{\partial q_{2k+3}}\nonumber.
\end{equation*}
\end{Thm}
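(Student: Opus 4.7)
The plan is to verify the identity $\exp(F_H(u,q))=\exp(U)\exp(P)\cdot\exp(F_K(q))$ by reducing it to a change-of-variables computation that matches the defining substitution $t_k\mapsto \widetilde{\phi}_k(u,q)$ for $F_H(u,q)$. I would first interpret $\exp(U)$ as the quantization of a classical symmetry. Under the heuristic correspondence $q_k \leftrightarrow z^k$, the linear piece $\sum_{k>0}(k+m)q_k\,\partial/\partial q_{k+m}$ of $L_m$ is the quantization of the vector field $z^{m+1}\partial_z$ on the space of formal series in $z$. Thus the "classical symbol" of $\exp(U) = \exp(\sum_{m>0} a_m u^m L_m)$ is precisely the change of variable $z\mapsto h(z)$ encoded by Eq.(\ref{ham}).

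Next I would extract the quantum correction. The quadratic $\partial^2$ pieces $\frac{1}{2}\sum_{a+b=m}ab\,\partial^2/(\partial q_a\partial q_b)$ of $L_m$ produce Wick-type contractions upon exponentiation. A careful bookkeeping by normal ordering should collapse these contractions into a single translation operator in $q$-space, which I would check coincides with $\exp(P) = \exp(-\sum b_{2k+1}u^{2k}\partial/\partial q_{2k+3})$. The coefficients $b_{2k+1}$ are forced by a residue identity applied to the series $h$, using the Lambert-W relation (\ref{hLambert}); in particular, the same series that characterizes $h$ in (\ref{h}) must govern the translation piece, leaving no free parameters.

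Finally I would match this transformation with the substitution $t_k\mapsto \widetilde{\phi}_k(u,q)$. By Eq.(\ref{def:phi}), $\widetilde{\phi}_k$ is built by iterating $L_{-2}+2uL_{-1}+u^2L_0 - q_1^2/2$ on $q_1$, with classical counterpart $\phi_k(u,z)=\widehat{D}^k z$ under $\widehat{D}$ as in (\ref{D}). So the "classical" part of $\exp(U)\exp(P)$ is forced to act on the symbol $z$ in exactly the way needed to produce $\widetilde{\phi}_k$. Substituting $t_k=\widetilde{\phi}_k(u,q)$ into $F_H(u,t)$ and using Mumford's Grothendieck-Riemann-Roch formula to relate the $\lambda_j$-insertions in $F_H(u,t)$ to pure $\psi$-integrals order-by-order in $u$, the Bernoulli coefficients appearing in Mumford's expansion are exactly those packaged in the $b_{2k+1}$, producing the claimed factorization.

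The main obstacle is the non-abelian resummation in the second step: since the $L_m$ satisfy the Witt algebra, $\exp(U)$ is a genuine non-abelian exponential whose contraction terms at each order in $u$ must be explicitly normal-ordered. The delicate point is that all these contractions collapse into the simple translation $\exp(P)$ (rather than a more elaborate quadratic correction), and this collapse is tightly controlled by the Lambert-W identity (\ref{hLambert}). Establishing it cleanly, via residue calculations that extract $b_{2k+1}$ directly from the Stirling/Bernoulli expansion of the Gamma function (cf.\ the Remark relating $C_i$, $b_{2i+1}$ and $B_{2k}$), is the technical heart of the proof.
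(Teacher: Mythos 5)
First, a framing point: the paper does not prove Theorem \ref{KtoHodge} itself --- it is imported verbatim from \cite{LW}, where it is the main theorem, and the present paper only records that it is equivalent to the Mumford--Givental formula (\ref{1}) under the substitution $t_k=\widetilde{\phi_k}(u,q)$. Your overall frame --- reduce to (\ref{1}) plus a change of variables, with the classical symbol of $\exp(U)$ realizing $z\mapsto h(z)$ via (\ref{ham}) --- is consistent with how that proof is organized, so your first and third steps point in the right direction, though the third step is asserted rather than carried out.

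The genuine gap is your second step. Write $U=X+Y$ with $X$ the first-order part and $Y=\frac{1}{2}\sum_{m>0}a_mu^m\sum_{a+b=m}ab\,\partial^2/\partial q_a\partial q_b$ the second-order part (the $q_iq_j$ terms of $L_m$ are absent for $m>0$). The commutator of $q_k\partial_{q_j}$ with $\partial_{q_a}\partial_{q_b}$ is again a pure second-order constant-coefficient operator, and such operators commute among themselves, so any normal-ordered factorization has the form $e^{X+Y}=e^{X}e^{\widetilde{Y}}$ with $\widetilde{Y}$ still purely second order in the $\partial_q$'s: there are no $q$'s left in $Y$ to contract against, and no Wick bookkeeping can turn these terms into the first-order translation $P=-\sum_k b_{2k+1}u^{2k}\partial_{q_{2k+3}}$. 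You have misidentified where $P$ comes from. In the actual proof the three kinds of terms in $W$ of Eq.~(\ref{W}) are matched separately: the $t_i\partial_{t_{i+2k-1}}$ terms against $X$ (this is Eq.~(\ref{eqn:LW})), the $\partial^2/\partial t_i\partial t_j$ terms against $Y$, and the $\partial/\partial t_{2k}$ terms against the independent factor $P$. The technical heart --- which your proposal leaves untouched --- is the verification that the single sequence $\{a_m\}$ fixed by the purely classical condition (\ref{ham}) simultaneously reproduces both the first-order and the genuine second-order parts of $e^W$ after the change of variables, together with the absence of any $u$-dependent prefactor (Alexandrov's $C(u)=1$). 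As written, your argument would at best produce $e^{X}e^{\widetilde{Y}}$ with an unidentified quadratic correction, not the claimed factorization $e^{U}e^{P}$.
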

Eq.(\ref{KtoHodgeequation}) is actually equivalent to Eq.(\ref{1}), which is deduced from Mumford's theorem (cf. \cite{Mu}, \cite{FP} and \cite{AG1}). Note that both of the operators $\exp(U)$ and $\exp(P)$ belong to $\widehat{GL(\infty)}$. Hence the KP integrability is preserved in this expression.

At the end of this section, we give an important property for a special type of generating function. Note that this is an independent result (not just for the Hodge tau-function):
\begin{Prop}\label{q2m}
Let $F(u,q)$ be a formal series in variables $u$ and $q_k$, $k\geq 1$. If $F(u,q)$ can be represented as a function of $u$ and $\{\widetilde{\phi_k}(u,q)\}$,
then for the operator $E^{(m)}$ defined in Eq. \eqref{epsilon}, $m\geq 1$,
\begin{equation}\label{epsilonConstraints}
E^{(m)}  \cdot F(u,q)=0.
\end{equation}
Especially, the Hodge tau function satisfies Eq. \eqref{epsilonConstraints} .
\end{Prop}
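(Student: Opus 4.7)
The plan is to reduce the proposition to a statement about each generator $\widetilde{\phi_k}$. Since $E^{(m)}$ is a first-order differential operator in the $q$-variables only, the chain rule gives
\[
E^{(m)}\cdot G(u,\widetilde{\phi_0}(u,q),\widetilde{\phi_1}(u,q),\dots)=\sum_{k\geq 0}\frac{\partial G}{\partial \widetilde{\phi_k}}\cdot\bigl(E^{(m)}\widetilde{\phi_k}\bigr),
\]
so it suffices to prove $E^{(m)}\widetilde{\phi_k}=0$ for every $k\geq 0$ and $m\geq 1$. Because $\widetilde{\phi_k}$ is obtained from $\phi_k(u,z)$ by the replacement $z^j\mapsto q_j$, this further reduces to showing that the linear functional
\[
\Lambda_m(P):=\sum_{j\geq 2m}[z^j](\eta^{2m})\cdot j u^{j-2m}\cdot[z^j]P(z)
\]
vanishes on $P=\phi_k(u,z)=\widehat{D}^k\cdot z$, where $\widehat{D}=z(u+z)^2\partial_z$ as in Eq.~\eqref{D}.

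Next I would express $\Lambda_m$ as a formal residue. Differentiating Eq.~\eqref{eta} yields $\eta'(w)=w/(\eta(w)(1+w)^2)$, hence $\partial_w\eta^{2m}=2mw\eta^{2m-2}/(1+w)^2$; using the formal pairing $\sum_j a_j b_j u^j=\mathop{\mathrm{Res}}_{w=0}A(w)B(u/w)\,\d w/w$, one obtains
\[
\Lambda_m(P)=2m\,u^{-2m}\mathop{\mathrm{Res}}_{w=0}\frac{w\,\eta(w)^{2m-2}\,P(u/w)}{(1+w)^2}\,\d w.
\]
Substituting $P\mapsto\widehat{D}P=z(u+z)^2P'(z)$ produces $(\widehat{D}P)(u/w)=u^{3}(1+w)^2 w^{-3}P'(u/w)$, cancelling the $(1+w)^2$; then the identity $\partial_w P(u/w)=-(u/w^2)P'(u/w)$ together with formal integration by parts in Laurent series (legitimate since $P(u/w)$ is polynomial in $1/w$ and $\eta(w)^{2m-2}$ vanishes to order $2m-2$, so the product has bounded positive powers of $w$) and $\partial_w\eta^{2m-2}=(2m-2)w\eta^{2m-4}/(1+w)^2$ gives the clean recursion
\[
\Lambda_m(\widehat{D}P)=2m\,\Lambda_{m-1}(P),\qquad m\geq 1,
\]
with the convention $\Lambda_0\equiv 0$ (the only surviving term in $\Lambda_0$ has $j=0$, killed by the factor $j$).

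The argument closes by induction on $k$. The base case is $\Lambda_m(z)=0$ for $m\geq 1$, because $\eta^{2m}$ vanishes to order $2m\geq 2$ at $z=0$, so $[z^1]\eta^{2m}=0$. Iterating the recursion on $\phi_k=\widehat{D}^k z$, after $\min(k,m)$ steps one lands on either $\Lambda_{m-k}(z)=0$ (if $k\leq m$) or $\Lambda_0(\widehat{D}^{k-m}z)=0$ (if $k>m$), so $\Lambda_m(\phi_k)=0$ in every case, proving the proposition. The main obstacle I expect is the careful bookkeeping in the residue manipulation: verifying that all Laurent series involved have bounded positive $w$-tails so that the formal integration by parts is justified, and confirming the precise numerical cancellation $2m(2m-2)/(2(m-1))=2m$ that produces the recursion.
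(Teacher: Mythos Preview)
Your argument is correct and takes a genuinely different route from the paper's.

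Both proofs begin with the same reduction to $E^{(m)}\widetilde{\phi_k}=0$. From there the paper proceeds \emph{algebraically}: it identifies $E^{(m)}$ as the conjugate $e^{ad_X}\bigl(2m\,\partial/\partial q_{2m}\bigr)$ for $X=\sum_{m\geq 1}a_mu^mX_m$, and invokes the change-of-variables identity $e^X\cdot q_{2n+1}=\frac{1}{(2n-1)!!}\sum_{i=0}^nC_iu^{2i}\widetilde{\phi_{n-i}}$ imported from~\cite{LW}. Since $\partial/\partial q_{2m}$ annihilates each odd variable $q_{2n+1}$, the conjugate $E^{(m)}$ annihilates each $e^X\cdot q_{2n+1}$; triangularity of the above sum then gives $E^{(m)}\widetilde{\phi_k}=0$ by induction on $k$. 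Your approach is instead \emph{analytic} and self-contained: you convert the pairing into a residue, use the differential equation $\eta'(w)=w/(\eta(w)(1+w)^2)$ to obtain the clean recursion $\Lambda_m(\widehat{D}P)=2m\,\Lambda_{m-1}(P)$, and descend to the trivial cases $\Lambda_m(z)=0$ and $\Lambda_0\equiv 0$. The paper's proof explains \emph{why} the vanishing holds (the $E^{(m)}$ are the even-index derivatives in disguise) at the cost of relying on results from~\cite{LW}; your proof requires nothing beyond the explicit formulas for $\eta$ and $\widehat{D}$, and in particular makes the proposition independent of the Kontsevich--Witten/Hodge connection.

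One small correction to your write-up: the product $\eta(w)^{2m-2}P(u/w)$ does \emph{not} have bounded positive powers of $w$ (the factor $\eta^{2m-2}$ is an infinite power series). What is true, and what actually justifies the formal integration by parts, is that it has bounded \emph{negative} powers --- it is a genuine formal Laurent series $\sum_{n\geq N}c_nw^n$, so the residue of its $w$-derivative vanishes. The computation itself is unaffected.
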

\begin{proof}
It suffices to prove
\begin{equation}\label{sufficetoprove}
E^{(m)}\cdot \widetilde{\phi_{k}}(u,q)=0,
\end{equation}
for $k\geq 0$. Let 
$$X_m=\sum_{k>0,k+m>0} (k+m)q_k\frac{\partial}{\partial q_{k+m}}, \quad X=\sum_{m\geq 1}a_mu^mX_m.$$ 
Then the operator $e^X$ acts as the change of variables (see Lemma 9 and 10 in \cite{LW}):
\begin{equation}\label{eqn:LW}
e^X\cdot q_{2n+1}=\frac{1}{(2n-1)!!}\sum_{i=0}^nC_iu^{2i}\widetilde{\phi_{n-i}}(u,q).
\end{equation}
Note that $\widetilde{\phi_{0}}(u,q)  = q_1$, and (see Eq.\eqref{qifkz}), $e^{ad_X} \left(2m\partial q_{2m}\right) = E^{(m)}.$
Then Eq.(\ref{sufficetoprove}) can be shown easily by doing induction on $k$ and using Eq.(\ref{eqn:LW}).
\end{proof} 

\section{Virasoro constraints}\label{Sec3}
The Virasoro constraints for the function $\exp(F_K(t))$ are  \cite{IZ}:
\begin{multline*}
\widehat{L}_m=\sum_{k\geq m}\frac{(2k+1)!!}{(2k-2m-1)!!}t_{k-m}\frac{\partial}{\partial t_k}+\frac{1}{2}\sum_{k+l=m-1} (2k+1)!!(2l+1)!!\frac{\partial^2}{\partial t_k\partial t_l}\\
-(2m+3)!!\frac{\partial}{\partial t_{m+1}}+\frac{t_0^2}{2}\delta_{m,-1}+\frac{1}{8}\delta_{m,0},  (m=-1,0,1,2,\dots)
\end{multline*}
such that, for $m\geq -1$,
$$\widehat{L}_m \cdot\exp(F_K(t)) =0.$$
Since the Kontsevich-Witten tau-function $\exp(F_K(q))$ has no even variables $q_{2k}$'s, we can deduce that
\begin{equation}\label{VforKW}
V^{(K)}_{2m}\cdot\exp(F_K(q))=0,
\end{equation}
where
\begin{equation*}
V^{(K)}_{2m}=L_{2m}-(2m+3)\frac{\partial}{\partial q_{2m+3}}+\frac{1}{8}\delta_{m,0}.
\end{equation*}

Generally speaking, for a tau-function $\tau$, if there exists an operator $A$, such that
$$\tau=e^{A}\cdot \exp(F_K(q)),$$
and the operator
\begin{equation}\label{conjugation}
e^{ad_{A}}V^{(K)}_{2m}=e^{A}V^{(K)}_{2m}e^{-A}
\end{equation}
is well-defined, then
$$(e^{ad_{A}}V^{(K)}_{2m})\cdot \tau=0,\quad m\geq -1.$$
And $\{e^{ad_{A}}V^{(K)}_{2m}\}$ form a set of Virasoro constraints for $\tau$. The formula
\begin{equation}\label{AM1}
e^{ad_{A}} M=\sum_{n=0}^{\infty} \frac{1}{n!} ad_{A}^nM\\
\end{equation}
can help us with the computation. Here `` $ad^j$ '' is the notation for the nested commutator:
$$ad_A^0M=M, \quad ad_A M=[A,M], \quad ad_A^j M=[A,ad_A^{j-1}M].$$
On the other hand, one must check that the conjugation (\ref{conjugation}) is well-defined, and does not give us a divergent result. This strategy guarantees the existence of the Virasoro constraints. However, one can see that the direct computation using (\ref{conjugation}) usually results in an infinite summation of differential operators with infinite many variables. Hence, seeking a better way to compute such conjugation becomes necessary. Based on the known relations introduced in the previous section, we will derive the Virasoro constraints $V^{(H)}_{m}$ for the Hodge tau-function in the next subsection using this strategy. And we will introduce a technique to compute such conjugation using power series in one variable.
 
\vspace{6pt}
\noindent
{\bf Remark:} In \cite{AE}, Alexandrov established the following formula  
$$\tau_{KW}=C(u)\widehat{G}_{+}\tau_{Hodge}.$$
that connects the Hodge and Kontsevich-Witten tau-functions, where $C(u)$ is an unknown Taylor series in $u$ and $\widehat{G}_{+}$ is a $\widehat{GL(\infty)}$ operator formed by Virasoro operators. The series $C(u)$ is conjectured to be one (Conjecture 2.1 in \cite{AE}). Hence this formula confirmed a weaker version of the conjecture raised in \cite{A}. Using this formula, Alexandrov obtained a set of Virasoro constraints for Hodge tau-function described by equations (2.193)-(2.196) in \cite{AE}. In \cite{LW}, Alexandrov's conjecture in \cite{A} has been proved using Theorem \ref{KtoHodge}. In fact, this theorem allows us to specify the prefactor $C(u)$, and hence prove Conjecture 2.1. But since we do not use the results and methodology in \cite{AE}, the proof of Conjecture 2.1 is not considered in this paper. These two formulas produce different expressions of the constraints (including the first order constraints $E^{(m)}$). For example, the operator $V_{-1}^{(H)}$ is in the form
\begin{equation*}
L_{-2}+2uL_{-1}+u^2L_0-\frac{u^2}{24}-\sum_{j=1}^{\infty}(-u)^{j-1}j\frac{\partial}{\partial q_j},
\end{equation*}
while the operator $\widehat{L}_{-1}^{Hodge}$ in (2.198) of \cite{AE} consists of infinite number of Virasoro operators. On the other hand, these two sets of constraints are equivalent to each other, since they are both obtained from the Virasoro constraints of Kontsevich-Witten tau function through conjugation.

\subsection{Proof of Theorem \ref{VforHodge}}\label{Sec31}
In this subsection, we prove Theorem \ref{VforHodge}. From the formula in Theorem \ref{KtoHodge}, we can see that the relation between two tau-functions has two parts, and we express the relation using the following diagram:
$$\exp(F_K(q)) \xrightarrow{\exp(P)} \exp(P)\cdot \exp(F_K(q)) \xrightarrow{\exp(U)} \exp(F_H(u,q)).$$
For the first part, since the operator $\exp(P)$ preserves the KP integrability, and does not include the variables $q_{2k}$, the function $\exp(P)\cdot \exp(F_K(q))$ is in fact a tau-function for the KdV hierarchy. Now, we consider the conjugation $e^PV^{(K)}_{2m} e^{-P}.$ The operator $\exp(P)$ acts as a shift on variables $q_{2k+3}$, $(k\geq 1)$. Hence
$$e^P V^{(K)}_{2m} e^{-P}=L_{2m}+P_m^{(1)}+\frac{1}{8}\delta_{m,0},$$
where
\begin{equation*}
P_m^{(1)}=-\sum_{k=0}^{\infty}(2m+2k+3)b_{2k+1}u^{2k}\frac{\partial}{\partial q_{2m+2k+3}}.
\end{equation*}
Let
\begin{equation}\label{Pm2}
P_m^{(2)}=\sum_{i=1}^{\infty}(2m+2+i)(-1)^ib_{i}u^{i-1}\frac{\partial}{\partial q_{2m+2+i}}.
\end{equation}
Since $e^P\cdot e^{F_K(q)}$ has no even variables $q_{2k}$, we can deduce that
\begin{equation}\label{L2m}
(L_{2m}+P_m^{(2)}+\frac{1}{8}\delta_{m,0}) \cdot \left(e^P\cdot e^{F_K(q)}\right)=0.
\end{equation}
The operators $\{L_{2m}+P_m^{(2)}+\frac{1}{8}\delta_{m,0}\}$ form a set of Virasoro constraints for the tau-function $\exp(P)\cdot \exp(F_K(q))$. Next, we set
\begin{equation}\label{VmHoperator}
V^{(H)}_m=e^{U} (L_{2m}+P_m^{(2)}+\frac{1}{8}\delta_{m,0}) e^{-U}.
\end{equation}
Then we can conclude that
\begin{Lem}\label{lem:VmHoperator}
If the right hand side of Eq.(\ref{VmHoperator}) is well-defined, then $\{V^{(H)}_m\}$ is a set of Virasoro constraints for $\exp(F_H(u,q))$.
\end{Lem}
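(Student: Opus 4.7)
My plan is to use the factorization from Theorem \ref{KtoHodge} combined with Eq.\eqref{L2m}, after which the lemma reduces to a one-line conjugation calculation. First I will write $\exp(F_H(u,q)) = e^U \cdot \bigl(e^P \cdot \exp(F_K(q))\bigr)$ as given by Theorem \ref{KtoHodge}, then apply $V^{(H)}_m$ using its definition in Eq.\eqref{VmHoperator}. The factor $e^{-U}$ sitting on the right of $V^{(H)}_m$ will cancel against the leftmost $e^U$ of the tau-function factorization, reducing the computation to
\[
V^{(H)}_m \cdot \exp(F_H(u,q)) = e^U \Bigl(L_{2m} + P_m^{(2)} + \tfrac{1}{8}\delta_{m,0}\Bigr) \cdot \bigl(e^P \cdot \exp(F_K(q))\bigr),
\]
and this vanishes directly by Eq.\eqref{L2m}. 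That is enough to conclude $V^{(H)}_m \cdot \exp(F_H(u,q)) = 0$ for all $m \geq -1$, which is the definition of a Virasoro constraint used throughout Section \ref{Sec3}.

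The substance of the argument is therefore entirely encoded in the hypothesis, and I do not expect any genuine difficulty beyond making the cancellation above rigorous. The cancellation $e^{-U}e^U = \mathrm{id}$ is valid as soon as both sides act on a common domain of formal power series, which is ensured by the well-definedness assumption. I will also point out that no further hypothesis on $\exp(P)$ is needed here, since the action of $\exp(P)$ on $\exp(F_K(q))$ is a plain shift of the odd variables $q_{2k+3}$ and is manifestly well-defined as a formal operator.

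The real obstacle, which this lemma intentionally defers, is verifying the well-definedness hypothesis itself. When one expands the conjugation $e^U (L_{2m}+P_m^{(2)}+\tfrac{1}{8}\delta_{m,0}) e^{-U}$ via Eq.\eqref{AM1}, one obtains an infinite formal sum of iterated commutators $ad_U^n$; for each fixed Virasoro generator $L_i$ or coordinate derivative $\partial/\partial q_j$, the coefficient appearing in the resulting operator is an infinite series in $u$ whose convergence and explicit form must be established. The substantive content of the remainder of Section \ref{Sec31} — namely, identifying these coefficients with $\nu_i^{(m)}u^{i-2m}$ and $\gamma_j^{(m)}ju^{j-2m-3}$ via the power series $\eta$ from Eq.\eqref{eta} — is what converts the conditional statement of Lemma \ref{lem:VmHoperator} into the unconditional statement of Theorem \ref{VforHodge}, and that is where I expect the main technical work to lie, rather than in the lemma itself.
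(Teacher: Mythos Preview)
Your argument is correct and is exactly the reasoning the paper has in mind: the lemma is stated immediately after Eq.~\eqref{L2m} and Eq.~\eqref{VmHoperator} with the phrase ``Then we can conclude that,'' and no separate proof is given because the cancellation $e^{-U}e^{U}=\mathrm{id}$ together with Eq.~\eqref{L2m} and Theorem~\ref{KtoHodge} is all that is needed. Your commentary on where the real work lies (verifying well-definedness and computing the coefficients) also matches the structure of the remainder of Section~\ref{Sec31}.
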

In the rest of this subsection, not only will we see that Eq.(\ref{VmHoperator}) is well-defined, but also we will compute the explicit expression of $V^{(H)}_m$.

Let $V^{(H)}_m=L_m^{(H)}+P_m^{(H)}+\frac{1}{8}\delta_{m,0}$, where
$$P_m^{(H)}=e^{U}P_m^{(2)}e^{-U} \quad\mbox{and}\quad L_m^{(H)}=e^{U}L_{2m}e^{-U}.$$
(The reason why we choose $P_m^{(2)}$ instead of $P_m^{(1)}$ in Eq.(\ref{VmHoperator}) is to obtain a simplified version of $P^{(H)}_m$ as in Theorem \ref{VforHodge}). First we compute $P_m^{(H)}$. Let $U=X+Y$, where $X$ is defined in the proof of Proposition \ref{q2m}, and $Y$ is the second order derivative part.
Since $Y$ commute with the first order differential operator $\partial q_k$, and ${\rm ad}_{X}^n\partial q_k$ is always a first order differential operator for $n\geq 1$, we have
\begin{equation*}
P_m^{(H)}=e^{X}P_m^{(2)}e^{-X}=\sum_{n=0}^{\infty}\frac{1}{n!}{\rm ad}_{X}^n P_m^{(2)}.
\end{equation*}
We can see that ${\rm ad}_{X}^n P_m^{(2)}$ consists of operators $\partial_{q_k}$ with $k\geq n+1$. Hence the operator $P_m^{(H)}$ obtained from the above equation is well defined. Next, for $m\geq -1$, we consider $L_m^{(H)}$. Recall that $U=\sum_{m=1}^{\infty} a_mu^mL_m,$
and the Virasoro operators satisfy the commutator relation
$$[L_m,L_n]=(m-n)L_{m+n}+\frac{1}{12}(m^3-m)\delta_{m+n,0},$$
for $m,n\in {\mathbb Z}$. Using the formula (\ref{AM1}), we can see that this conjugation will give us a well-defined linear combination of $\{L_i\}$ with $i\geq 2m$, plus a constant term $-u^2/24$ when $m=-1$. In order to obtain the explicit expression of the operators $P_m^{(H)}$ and $L_m^{(H)}$, we introduce the next lemma, which tells us that the coefficients of the operator can be restored from a power series. 
\begin{Lem}\label{pmH+LmH}
\begin{enumerate}[(a)]
\item For $m\geq -1$, let
\begin{equation}\label{definepmz}
p_m(z)=\sum_{i=1}^{\infty} (-1)^ib_iz\frac{\d}{\d z} \left(\exp(-\sum_{k=1}^{\infty}a_kz^{1+k}\frac{\d}{\d z}) \cdot z^{2m+2+i}\right).
\end{equation}
Then,
$$\left[u^i\frac{\partial}{\partial q_i}\right]\left(u^{2m+3}P_m^{(H)}\right)=\left[z^i\right]p_m(z), \quad i\geq 2m+2.$$
\item For $m\geq -1$, let
\begin{equation}\label{defineGmz}
G_m(z)=\exp(-\sum a_kz^{1+k}\frac{\d}{\d z}+\sum ka_kz^{k})\cdot z^{2m}.
\end{equation}
Then,
$$\left[u^iL_i\right]\left(u^{2m}L_m^{(H)}\right)=\left[z^i\right] G_m(z), \mbox{ for } i\geq 2m.$$
\end{enumerate}
\end{Lem}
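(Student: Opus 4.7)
The plan is to introduce symbol maps that convert operator-valued calculations into formal power series identities in an auxiliary variable $z$, so that the conjugations $e^U(\cdot)e^{-U}$ become the action of explicit differential operators on power series. For part (a) I first note that, since $Y$ is purely second order in the $q$-derivatives, $\operatorname{ad}_Y$ annihilates every first-order operator, so $P_m^{(H)}=e^X P_m^{(2)} e^{-X}$ and only $\operatorname{ad}_X$ need be tracked. I would use the symbol map $n\,\partial/\partial q_n\leftrightarrow z^n$, so that a first-order operator $P=\sum_n c_n(u)\,\partial/\partial q_n$ corresponds to $\sigma(P)=\sum_n(c_n(u)/n)\,z^n$. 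The direct calculation $[X_k,\partial/\partial q_n]=-(n+k)\,\partial/\partial q_{n+k}$ translates to $\sigma(\operatorname{ad}_X P)=-\hat D_u\,\sigma(P)$ with $\hat D_u:=\sum_{k\geq 1}a_k u^k z^{k+1}\partial_z$. Since $\hat D_u$ is a derivation, $e^{-\hat D_u}$ acts as an algebra automorphism of formal power series, and the scaling $z\to uz$ applied to (\ref{def:eta}) gives $e^{-\hat D_u}\cdot z=u^{-1}\eta(uz)$, hence $e^{-\hat D_u}\cdot z^n=u^{-n}\eta(uz)^n$. Feeding this into the symbol of $P_m^{(2)}$ and multiplying by $u^{2m+3}$ yields
$$\sigma\!\left(u^{2m+3}P_m^{(H)}\right)=\sum_{i\geq 1}(-1)^i b_i\,\eta(uz)^{2m+2+i};$$
extracting $[u^iz^i]$ on both sides together with $[z^n](z\partial_z f)=n\,[z^n]f$ converts this into the claim of part (a).

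For part (b) the analogous symbol map is $L_j\leftrightarrow z^j$ on the Witt subalgebra of Virasoro. Modulo the center, $[L_k,L_j]=(k-j)L_{k+j}$ makes $\operatorname{ad}_{L_k}$ correspond to the operator $M_k:=kz^k-z^{k+1}\partial_z$ on power series, so $\operatorname{ad}_U$ corresponds to $\phi_u-D_u$ with $\phi_u:=\sum_k ka_k u^k z^k$ and $D_u:=\sum_k a_k u^k z^{k+1}\partial_z$. Conjugation with the scaling operator $\sigma_u:f(z)\mapsto f(uz)$ yields $\phi_u-D_u=\sigma_u(\phi_0-D_0)\sigma_u^{-1}$, where $\phi_0,D_0$ are the $u=1$ specializations; consequently,
$$e^{\operatorname{ad}_U}L_{2m}\ \longleftrightarrow\ \sigma_u e^{\phi_0-D_0}\sigma_u^{-1}\cdot z^{2m}=u^{-2m}G_m(uz),$$
so $u^{2m}L_m^{(H)}\leftrightarrow G_m(uz)$ and the coefficient of $u^iz^i$ is exactly $[z^i]G_m(z)$, as claimed. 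The central extension $\tfrac1{12}(k^3-k)\delta_{k+j,0}$ in the Virasoro bracket contributes only scalar (identity-proportional) terms to $L_m^{(H)}$—in particular the constant $-u^2/24$ appearing for $m=-1$—and therefore does not affect the coefficient of any $L_i$.

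The main subtle point is to justify the two symbolic reductions. On the convergence side, one must check that $e^{-\hat D_u}$ and $e^{\phi_u-D_u}$ make sense order-by-order in $u$; this follows because every $a_k$ enters with $u^k$ for $k\geq 1$, so each application of the differential operator strictly raises the $u$-degree and only finitely many terms contribute to any given $(u^i,z^i)$ coefficient. On the structural side, part (b) requires that the Witt symbol map $L_j\leftrightarrow z^j$ faithfully capture the $L_i$-content of iterated commutators even though $L_{2m}$ contains its own second-order piece $Y_{2m}$ and (for $m=-1$) the quadratic piece $\tfrac12 q_1^2$; the point is that $U$ and $L_{2m}$ both lie in the Virasoro Lie algebra, so $\operatorname{ad}_U^n L_{2m}$ remains in Virasoro and is correctly encoded modulo the center. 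Once these verifications are in place, the scaling $z\to uz$ together with (\ref{def:eta}) and the definition of $G_m$ finish the proof.
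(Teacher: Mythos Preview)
Your proposal is correct and follows essentially the same strategy as the paper: both arguments rest on the correspondence $n\,\partial/\partial q_n \leftrightarrow z^n$ (resp.\ $L_j \leftrightarrow z^j$), under which $\operatorname{ad}_X$ (resp.\ $\operatorname{ad}_U$ modulo center) becomes the first-order differential operator $-\sum a_k u^k z^{k+1}\partial_z$ (resp.\ $\sum a_k u^k(kz^k - z^{k+1}\partial_z)$), so that $e^{\operatorname{ad}}$ applied to the operator matches $e^{-\hat D_u}$ (resp.\ $e^{\phi_u-D_u}$) applied to its symbol. The paper verifies this correspondence by expanding both the iterated commutator and the power series as explicit nested sums and matching them term by term (its Eq.~(\ref{qifkz}) for part (a), and only the phrase ``Similarly'' for part (b)); you instead recognize the operator $\phi_u-D_u$ as the scaling conjugate $\sigma_u(\phi_0-D_0)\sigma_u^{-1}$, which immediately gives the $uz$-dependence and lets you read off the $[u^i z^i]$-coefficient without writing out the combinatorics. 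This is a cleaner packaging of the same computation, and your explicit treatment of part (b)---including the observation that the central term contributes only the scalar $-u^2/24$ for $m=-1$ and hence does not affect any $L_i$-coefficient---is more complete than what the paper records.
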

\begin{proof}
\begin{enumerate}[(a)]
\item We consider the set of power series $f_k(z)$ defined as
\begin{equation}\label{fkz}
f_k(z)=z\frac{\d}{\d z} \left(\exp(-\sum_{m=1}^{\infty}a_mz^{1+m}\frac{\d}{\d z}) \cdot z^{k}\right),
\end{equation}
for $k\geq 1.$ 
If we expand the right hand side of the above equation, we obtain
\begin{align*}
f_k(z)=& kz^k+\sum_{n=1}^{\infty}\frac{(-1)^n}{n!}\sum_{m_j>0}ka_{m_1}(k+m_1)a_{m_2}(k+m_1+m_2)\dots\\
&\quad\quad \dots a_{m_n} (k+\sum_{j=1}^n m_j) z^{k+\sum_{j=1}^n m_j}.
\end{align*}
A straightforward computation shows that
\begin{align*}
e^{X}\left(ku^k\frac{\partial}{\partial q_{k}}\right)e^{-X}
=&ku^k\frac{\partial}{\partial q_{k}}+\sum_{n=1}^{\infty}\frac{(-1)^n}{n!}\sum_{m_1,\dots,m_n>0} ka_{m_1}(k+m_1)\dots\\
&\quad\quad  \dots a_{m_n} (k+\sum_{j=1}^n m_j)u^{k+\sum_{j=1}^n m_j}\frac{\partial}{\partial q_{k+\sum_{j=1}^n m_j}}
\end{align*}
The above two equations give us
\begin{equation}\label{qifkz}
\left[u^i\frac{\partial}{\partial q_i}\right] \left(e^{X}\left(ku^k\frac{\partial}{\partial q_{k}}\right)e^{-X}\right)=\left[z^i\right]f_k(z), \mbox{ for } i\geq k.
\end{equation}
Then (a) follows from the definitions of $P_m^{(2)}$ by Eq.(\ref{Pm2}) and $p_m(z)$ by Eq.(\ref{definepmz}).

\item Similarly, we can expand the right hand side of Eq.(\ref{defineGmz}) and compare it with the general expression of $u^{2m}L_m^{(H)}$. 

\end{enumerate}
\end{proof}

Next, we compute $p_m(z)$ and $G_m(z)$. Using the Zassenhaus formula (see \cite{CMN}, \cite{LW}), we have
$$\exp(-\sum a_kz^{1+k}\frac{\d}{\d z}+\sum ka_kz^{k})=\exp(-\sum a_kz^{1+k}\frac{\d}{\d z})\exp(g(z)),$$
where $g(z)$ is a power series with no constant term. The differential operator on the right hand side of the above equation performs a change of variables $z\rightarrow \eta(z)$. Then
\begin{Prop}
\begin{enumerate}[(a)]\label{4}
  \item \begin{equation}\label{pmz}
  p_m(z)= z\frac{\d}{\d z} \left(-\frac{z}{1+z}\eta^{2m+2} \right).
  \end{equation}
  \item $$g(z)=2\log\frac{z(1+h)}{h},\quad
    G_m(z)=\frac{(1+z)^2}{z^2}\eta^{2m+2}.$$
\end{enumerate}
\end{Prop}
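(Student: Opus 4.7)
By equation (\ref{def:eta}), the operator $\exp(-\sum_{m>0}a_m z^{1+m}\frac{d}{dz})$ is the substitution $z\mapsto\eta(z)$, so it sends $z^{2m+2+i}$ to $\eta^{2m+2+i}$. The sum in the definition of $p_m(z)$ then factors as $z\frac{d}{dz}\bigl(\eta^{2m+2}\sum_{i\geq 1}(-1)^i b_i \eta^i\bigr)$. By (\ref{h}) we have $\sum_{i\geq 0}(-1)^i b_i y^i = 1/(1+h(y))$ with $b_0=1$, so the inner sum equals $-h(\eta)/(1+h(\eta))=-z/(1+z)$ after using $h\circ\eta=\mathrm{id}$. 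This yields the claimed formula for $p_m$.

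\textbf{Part (b), reducing to $g$.} Given the Zassenhaus splitting $e^{D+M}=e^D e^{g(z)}$, where $D=-\sum a_k z^{1+k}\partial_z$ and $M$ is multiplication by $\sum k a_k z^k$, one has $G_m(z)=e^D(e^{g(z)}z^{2m})=e^{g(\eta(z))}\eta(z)^{2m}$, so the formula for $G_m$ will follow from that for $g$. To compute $g$ I would introduce the one-parameter family $e^{s(D+M)}=e^{sD}e^{g(s,z)}$ with $g(0,z)=0$ and differentiate in $s$. Cancelling $e^{g(s,z)}$ on the right gives $\partial_s g=e^{-sD}Me^{sD}$, which acts as multiplication by $M(h^{(s)}(z))$, where $h^{(s)}$ is the time-$s$ inverse flow. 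Writing $M(y)=yU'(y)$ with $U(y)=\sum a_k y^k$ and using the flow equation $dh^{(s)}/ds=h^{(s)}U(h^{(s)})$, the integrand becomes a total $s$-derivative, $M(h^{(s)})=\frac{d}{ds}\log U(h^{(s)})$. Integrating from $s=0$ to $s=1$ yields $g(z)=\log\bigl(U(h(z))/U(z)\bigr)$.

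\textbf{Identifying the ratio.} The main obstacle is computing $U(h)/U(z)$ without a closed form for $U$. The key step is the one-parameter-group property $\eta^{(s)}\circ\eta=\eta\circ\eta^{(s)}$, which upon differentiating at $s=0$ gives the functional equation $V(\eta(z))=\eta'(z)V(z)$ for $V(y)=-yU(y)$. Combined with $\eta'(z)=z/(\eta(1+z)^2)$, obtained by differentiating $\eta^2/2=\log(1+z)-z/(1+z)$, this produces $U(\eta)/U(z)=z^2/(\eta^2(1+z)^2)$. Substituting $z\mapsto h(z)$ and using $\eta\circ h=\mathrm{id}$ inverts this identity to $U(h(z))/U(z)=z^2(1+h)^2/h^2$, so $g(z)=2\log\bigl(z(1+h)/h\bigr)$. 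The formula $G_m(z)=(1+z)^2\eta^{2m+2}/z^2$ then follows from $e^{g(\eta(z))}=(\eta(1+h(\eta))/h(\eta))^2=(\eta(1+z)/z)^2$.
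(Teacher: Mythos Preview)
Your argument is correct. Part (a) is essentially identical to the paper's proof.

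For part (b) you take a genuinely different route. The paper first identifies the numbers $d_n=(-1)^n\frac{4}{(n+2)(n+1)n}$ from the equation $\partial_u\eta(u,z)=\bigl(\sum d_n u^{n-1}z^{1+n}\partial_z\bigr)\eta(u,z)$, then invokes the Appendix~A.2 identity $z\,g'(z)=\exp(\sum a_k z^{1+k}\partial_z)\cdot(-\sum d_n n z^n)$, sums this series in closed form after substituting $z\mapsto h$, and integrates. Your approach bypasses the $d_n$ entirely: differentiating the Zassenhaus splitting in the flow parameter gives the general formula $g(z)=\log\bigl(U(h(z))/U(z)\bigr)$, valid for \emph{any} generator $U$, and the specific answer then drops out of the single functional equation $V(\eta)=\eta'\,V(z)$ together with the closed form $\eta'=z/(\eta(1+z)^2)$. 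This is cleaner and more conceptual—the only place the particular $\eta$ enters is through $\eta'$—whereas the paper's method leans on the explicit Lambert-$W$ structure from the outset via the $d_n$. On the other hand, the paper's route makes the connection to the Appendix~A.2 machinery (and hence to the operator-level computations used elsewhere, e.g.\ in Section~\ref{Sec42}) more transparent.
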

\begin{proof}
\begin{enumerate}[(a)]
\item By Eq.(\ref{def:eta}), we have
\begin{align*}
p_m(z)=& z\frac{\d}{\d z}\left(\eta^{2m+2} \sum_{i=1}^{\infty} (-1)^ib_i\eta^{i}\right)\\
\end{align*}
Since $\eta$ is the inverse function of $h$, using Eq.(\ref{h}), we can obtain Eq. (\ref{pmz}).
\item Let
$$\eta(u,z)=\exp(-\sum a_kz^{1+k}u^k\frac{\partial}{\partial z})\cdot z.$$
Then
\begin{align*}
\frac{\partial}{\partial u}\eta(u,z)
=&\left(\sum_{n=1}^{\infty} (-1)^{n}\frac{4}{(n+2)(n+1)n}u^{n-1}z^{1+n}\frac{\partial}{\partial z}\right)\cdot\eta(u,z).
\end{align*}
Hence, (see the argument of Eq.(\ref{gz}) in Appendix A.2),
$$z\frac{\d}{\d z}g(z)=\exp(\sum a_kz^{1+k}\frac{\d}{\d z})\cdot \left(-\sum_{n=1}^{\infty}d_{n}nz^{n}\right),$$
where
$$d_{n}=(-1)^{n}\frac{4}{(n+2)(n+1)n}, \quad (n\geq 1).$$
By Eq.(\ref{ham}), we have
\begin{align*}
\exp(\sum a_kz^{1+k}\frac{\d}{\d z})\cdot\left(-\sum_{n=1}^{\infty}d_{n}nz^{n}\right)
=&-2(\frac{1}{h}z\frac{dh}{dz}-\frac{1}{1+h}z\frac{dh}{dz}-1).
\end{align*}
And we can conclude that, $g(z)=2\log\frac{z(1+h)}{h}$, and
\begin{align*}
G_m(z)=\exp(-\sum a_kz^{1+k}\frac{\d}{\d z})\cdot \left(e^{g(z)} z^{2m}\right)=\frac{(1+z)^2}{z^2}\eta^{2m+2}.
\end{align*}
\end{enumerate}
\end{proof}
Now, if we let
\begin{equation*}
\gamma_j^{(m)}=\left[z^j\right]\frac{-z}{1+z}\eta^{2m+2}; \quad
\nu_i^{(m)}=\left[z^i\right]G_m(z),
\end{equation*}
then, by Lemma \ref{pmH+LmH} and the above proposition, we have
\begin{equation*}
P_m^{(H)}=\sum_{j=2m+3}^{\infty}\gamma_j^{(m)}ju^{j-2m-3}\frac{\partial}{\partial q_j},\quad L_m^{(H)}=\sum_{i=2m}^{\infty} \nu_i^{(m)}u^{i-2m}L_i-\frac{u^2}{24}\delta_{m,-1}.
\end{equation*}
From Lemma \ref{lem:VmHoperator}, we can see that
$$\left(L_m^{(H)}+P_m^{(H)}+\frac{1}{8}\delta_{m,0}\right)\cdot \exp(F_H(u,q))=0.$$
Therefore, we have proved the statement ({\emph a}) in Theorem \ref{VforHodge}.

Next we prove the statement ({\emph b}) of Theorem \ref{VforHodge}. Since $F_K(q)$ does not depend on parameter $u$,  we have $u\partial_u\exp(F_K(q))=0$. Then, after computing $\exp(P)(u\partial_u)\exp(-P)$, we get
$$\left(u\frac{\partial}{\partial u}-\sum_{i=1}^{\infty}(-1)^ib_iu^{i-1}(i-1)\frac{\partial}{\partial q_{i+2}} \right)\cdot \left(e^P\cdot e^{F_K(q)}\right)=0.$$
By Eq.(\ref{L2m}), we can deduce that
\begin{equation}\label{L0u}
\left(L_0+u\frac{\partial}{\partial u}+3\sum_{i=1}^{\infty}(-1)^ib_iu^{i-1}\frac{\partial}{\partial q_{i+2}}+\frac{1}{8} \right)\cdot \left(e^P\cdot e^{F_K(q)}\right)=0.
\end{equation}
Observe that
$$e^{ad_U}(L_0+u\frac{\partial}{\partial u})=L_0+u\frac{\partial}{\partial u};$$
$$\left[u^i\frac{\partial}{\partial q_i}\right]e^{ad_U} \left( u^k\frac{\partial}{\partial q_k}\right) =\left[z^i\right]\left\{\exp\left(-\sum_{m=1}^{\infty}a_m(z^{1+m}\frac{\d}{\d z}+mz^m)\right)\cdot z^k\right\}.$$
And we have the series
\begin{equation*}
\exp\left(-\sum_{m=1}^{\infty}a_m(z^{1+m}\frac{\d}{\d z}+mz^m)\right)\cdot \sum_{i=1}^{\infty}(-1)^ib_iz^{i+2}
=-\sum_{k=0}^{\infty}(-1)^k\binom{k+2}{k}z^{k+3}.
\end{equation*}
This proves Eq.(\ref{dilaton3}). Therefore, we have completed the proof of Theorem \ref{VforHodge} in this subsection.

\subsection{Proof of Corollary \ref{VforPartition}}\label{Sec32}
Before proving Corollary \ref{VforPartition}, we need to set up some tools. And the tools are once again the power series. Recall the operator $\widehat{D}$ and the definition (\ref{phiuz}) in Sect.\ref{Sec2}. Let
\begin{equation*}
D = (1+z)^2z\frac{\d}{\d z},\quad
\phi_{n}(z) = D^n\cdot z=\phi_n(u=1,z).
\end{equation*}
Let $f=\left(\eta(z^{-1})\right)^{-1}$. The series $f$ satisfy the following equation
\begin{equation*}
e^{-\frac{1}{2}f^{-2}}=\frac{z}{1+z}e^{\frac{1}{1+z}}.
\end{equation*}
In fact, many properties of the series $f$ can be found in the paper \cite{LW}. Here, we state some of the results that will used in our later proofs. First, we have $D\cdot f=f^3$. If we let $y=\frac{1}{2}f^{-2}$, $v=\frac{z}{1+z}$, and denote also by $D$ the vector field $D=(1+z)^2z\partial_z$, then
\begin{equation}\label{v}
v=1+\sum_{i=1}^{\infty}b_i(2y)^{i/2},
\end{equation}
and
\begin{equation}\label{vectorfield}
D=-\partial_y， \quad z=-\partial_y v.
\end{equation}
Let $F$ be a power series $F=z^n+\sum_{i=1}^{\infty} A_iz^{n-i},$
for some coefficients $A_i$. We define $(F)_{+}$ to be $\left(F\right)_{+}=z^n+\sum_{i=1}^{n-1} A_iz^{n-i}.$
Then
\begin{equation}\label{f}
\left(f^{2n+1}\right)_{+}=\frac{1}{(2n-1)!!}\sum_{i=0}^n C_i\phi_{n-i}(z).
\end{equation}
And, conversely, the polynomials $\phi_{n}(z)$ can be represented by the series $f$ as (see Lemma \ref{Lemphinz} in Appendix A.2):
\begin{equation}\label{phinz}
\phi_{n}(z) =\sum_{i=0}^n(-1)^i(2n-2i-1)!!C_i\left(f^{2n-2i+1}\right)_{+}.
\end{equation}

For the rest of this subsection, we consider the action of the differential operators $L_m^{(H)}$ and $P_m^{(H)}$ on $t_k$, where $t_k=\widetilde{\phi_k}(u,q)$. We write $L_m^{(H)}+\frac{u^2}{24}\delta_{m,-1}$ as two parts:
$$L_m^{(H)}+\frac{u^2}{24}\delta_{m,-1}=\sum_{i=2m}^{\infty}\nu_i^{(m)}u^iX_i+\frac{1}{2}\sum_{i=2m}^{\infty}\nu_i^{(m)}u^iY_i,$$
where
$$X_i=\sum_{k>0}(k+i)q_k\frac{\partial}{\partial q_{k+i}}\quad \mbox{and} \quad Y_i=\sum_{a+b=i}ab\frac{\partial^2}{\partial q_a\partial q_b}.$$
Now, to prove Corollary \ref{VforPartition}, we mainly need to prove the following lemma.
\begin{Lem}
For the number $c_n^{(k,m)}$ defined in Eq.(\ref{c}), we have
\begin{enumerate}[(a)]
\item
\[
[\sum_{i=2m}^{\infty}\nu_i^{(m)}u^iX_i,t_k]
=
\begin{cases}
0 & k<m \\
\sum_{n=0}^{k-m}c_n^{(k,m)}u^{2(m+n)}t_{k-m-n} & k\geq m .
\end{cases}.
\]
\item
\begin{equation}\label{commutator2}
[\sum_{i=2m}^{\infty}\nu_i^{(m)}u^iY_i,t_k]
=\sum_{\substack{l\geq 0\\l\geq m-k-1}}(-1)^{l+1}c_{k-m+l+1}^{(k,m)}u^{2k+2l+2}\frac{\partial}{\partial t_l}.  \nonumber
\end{equation}
\item
\[
[P_m^{(H)}, t_k]=
\begin{cases}
0 & k\leq m \\
-u^{2k-2m-2}c_{k-m-1}^{(k,m)}  & k> m .
\end{cases}.
\]
\end{enumerate}
\end{Lem}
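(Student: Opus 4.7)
The central observation is that $t_k=\widetilde{\phi_k}(u,q)$ is \emph{linear} in the $q_j$'s: since $\phi_k(u,z)=\widehat{D}^k z$ is a polynomial in the single variable $z$ and $\widetilde{\phi_k}$ is obtained by $z^j\mapsto q_j$, we have $\partial_{q_a} t_k=[z^a]\phi_k(u,z)$ and $\partial_{q_a}\partial_{q_b} t_k=0$. Consequently $Y_i\cdot t_k=0$, so $[Y_i,t_k]$ collapses to the pure first-order $q$-operator $2\sum_{a+b=i}ab\,[z^a]\phi_k(u,z)\,\partial_{q_b}$, which is converted into the $\partial_{t_l}$-form on the right of (b) via the chain rule $\partial_{q_b}=\sum_l[z^b]\phi_l(u,z)\,\partial_{t_l}$. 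The commutators in (a) and (c) reduce directly to the evaluations $X_i\cdot t_k$ (a linear combination of $q_j$'s) and $P_m^{(H)}\cdot t_k$ (a constant).

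Using the homogeneity $\phi_k(u,z)=u^{2k+1}\phi_k(z/u)$, with $\phi_k(w)=D^k w$ and $D=(1+w)^2 w\partial_w$, all $u$-powers factor out cleanly. For part (a), combining the simplifications $G_m(1/w)=(1+w)^2/f(w)^{2m+2}$ (from $\eta(1/w)=1/f(w)$ and the definition of $G_m$) with $w\partial_w\phi_k=\phi_{k+1}/(1+w)^2$ (from $\phi_{k+1}=D\phi_k$), the claim reduces to the single-variable polynomial identity
\[
\bigl(\phi_{k+1}(w)/f(w)^{2m+2}\bigr)_+ \;=\; \sum_{n=0}^{k-m}c_n^{(k,m)}\,\phi_{k-m-n}(w),
\]
with the $k<m$ branch immediate since the LHS then carries only non-positive powers of $w$. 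Part (c) reduces analogously to the linear pairing $\sum_j j\gamma_j^{(m)}[w^j]\phi_k(w)=-c_{k-m-1}^{(k,m)}$ (empty, hence zero, when $k\le m$), and part (b) to a bilinear analog involving $\nu_{a+b}^{(m)}$, $\phi_k$ and $\phi_l$.

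Each such identity is then a two-step application of the known expansions (\ref{f}) and (\ref{phinz}). Expanding $\phi_{k+1}$ via (\ref{phinz}) in terms of $(f^{2k-2i+3})_+$ and using the compatibility
\[
\bigl((f^{2j+1})_+/f^{2m+2}\bigr)_+ \;=\; \bigl(f^{2j-2m-1}\bigr)_+
\]
(valid because $f^{2j+1}-(f^{2j+1})_+$ carries only non-positive powers of $w$, which stay non-positive after division by $f^{2m+2}$), one obtains $(\phi_{k+1}/f^{2m+2})_+=\sum_i(-1)^i(2k-2i+1)!!\,C_i\,(f^{2(k-m-i)+1})_+$. Re-expanding each $(f^{2(k-m-i)+1})_+$ via (\ref{f}) and collecting by $n=i+j$ produces exactly $c_n^{(k,m)}=\sum_i(-1)^i\frac{(2k-2i+1)!!}{(2k-2m-2i-1)!!}C_iC_{n-i}$, proving (a). Parts (c) and (b) follow by the same two-step rewrite applied to the relevant pairings; for (c) one uses in addition the residue identification $\sum_j[w^j]p_m\cdot[w^j]\phi_k=[w^0]p_m(w)\phi_k(1/w)$, with $p_m=w\partial_w(-w/(1+w)\eta^{2m+2})$, together with the dual simplification $(-w/(1+w)\eta^{2m+2})|_{w\mapsto 1/w}=-1/((w+1)f^{2m+2})$.

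I expect the main obstacle to be the careful bookkeeping of the truncation operator $(\,\cdot\,)_+$ when pulling series through singular factors such as $1/f^{2m+2}$ and $1/(1+w)^2$. The compatibility identity above is what keeps the whole computation finite and is responsible for the vanishing case splits (for $k\le m$ in (a) and (c), and the lower bound $l\ge m-k-1$ in (b)); once this $(\cdot)_+$-calculus is laid out, parts (a), (b) and (c) become three parallel applications of (\ref{f}) and (\ref{phinz}) composed in sequence.
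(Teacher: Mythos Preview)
Your plan for part (a) coincides with the paper's proof: both reduce the commutator to the single-variable identity $\bigl(\phi_{k+1}(z)/f(z)^{2m+2}\bigr)_+=\sum_{n}c_n^{(k,m)}\phi_{k-m-n}(z)$, expand $\phi_{k+1}$ via (\ref{phinz}), push the division by $f^{2m+2}$ through using your compatibility $\bigl((f^{2j+1})_+/f^{2m+2}\bigr)_+=(f^{2(j-m)-1})_+$, and reassemble via (\ref{f}). That part is fine.

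The gap is in (c) and (b), where the claim that ``the same two-step rewrite'' suffices is too optimistic. For (c), after your residue identification and dual simplification you arrive (up to an integration by parts in $w\partial_w$) at $-[z^0]\bigl(\tfrac{z^{-3}}{(1+z^{-1})^3}f^{-2m-2}\phi_{k+1}(z)\bigr)$, with the prefactor expanded in $z^{-1}$. Replacing $f^{-2m-2}\phi_{k+1}$ by its $(\cdot)_+$ part is legitimate here (the prefactor carries only powers $z^{\le -3}$), and by part (a) this gives $-\sum_n c_n^{(k,m)}\,[z^0]\tfrac{z^{-3}}{(1+z^{-1})^3}\phi_{k-m-n}(z)$. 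But you still must show that $[z^0]\tfrac{z^{-3}}{(1+z^{-1})^3}\phi_N(z)=\delta_{N,1}$. The paper does this via the structural factorization $\phi_N(z)=z(1+z)^{N+1}\kappa_N(z)$ with $\kappa_N(0)=1$, which cancels the $(1+z^{-1})^3$ and leaves only strictly positive powers when $N\ge 2$. This is not a consequence of (\ref{f}) and (\ref{phinz}).

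For (b), the bilinear pairing you allude to is, after the same simplifications, the residue $\operatorname{Res}_{z=0}\bigl\{f^{-2m-3}(D^{k+1}z)(D^{l+1}z)\,\tfrac{df}{dz}\bigr\}$, and your $(\cdot)_+$-compatibility does not distribute over the product $(D^{k+1}z)(D^{l+1}z)$. The paper instead changes variables to $y=\tfrac12 f^{-2}$, for which $D=-\partial_y$ and $v=z/(1+z)$ has the explicit half-integer expansion $v=1+\sum_i b_i(2y)^{i/2}$; the residue then becomes a coefficient of $(\partial_y^{k+2}v)(\partial_y^{l+2}v)$, which is read off directly to give $(-1)^{l+1}c_{k-m+l+1}^{(k,m)}$. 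Your outline lacks this change of variables, and without it the ``two-step rewrite'' does not close for (b).
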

\begin{proof}
\begin{enumerate}[(a)]
\item
Observe that, for $k>i$, $[X_i, q_k]=X_i\cdot q_k=kq_{k-i}$ and $z^{1-i}\frac{\d}{\d z} z^k=kz^{k-i}$. Then it is sufficient to consider the following equation
\begin{align*}
\left(\sum_{i=2m}^{\infty}\nu_i^{(m)}z^{1-i}\frac{\d}{\d z} \phi_k(z)\right)_{+}=\left(\frac{(1+z^{-1})^2}{z^{-2}}f^{-2m-2} z\frac{\d}{\d z} \phi_k(z)\right)_{+}.
\end{align*}
Here $\frac{(1+z^{-1})^2}{z^{-2}}f^{-2m-2}$ is a power series in $z^{-1}$ and $z\frac{\d}{\d z} \phi_k(z)$ is a finite degree polynomial in $z$. For $k\geq m$, we have
\begin{align*}
&\left(\frac{(1+z^{-1})^2}{z^{-2}}f^{-2m-2} z\frac{\d}{\d z} \phi_k(z)\right)_{+}\\
=&\sum_{i=0}^k(-1)^i(2k-2i+1)!!C_i\left(f^{2k-2m-2i+1}\right)_{+}\\
=&\sum_{i=0}^{k-m}(-1)^i\frac{(2k-2i+1)!!}{(2k-2m-2i-1)!!}C_i\sum_{j=0}^{k-m-i}C_j\phi_{k-m-i-j}(z),
\end{align*}
where the two steps follow from Eq.(\ref{phinz}) and Eq.(\ref{f}) respectively. The degree of parameter $u$ can be easily checked by comparing both sides of the equation in the lemma.
\item From the change of variables $t_k=\widetilde{\phi_k}(u,q)$, we can deduce that
$$b\frac{\partial}{\partial q_b}=\sum_{l\geq \frac{b-1}{2}}[z^{b-1}]\left(\frac{\d}{\d z}\phi_l(z)\right)u^{2l+1-b}\frac{\partial}{\partial t_l}.$$
On the other hand, for $a,b,i>0$ and $a+b=i$,
\begin{align*}
[\sum_{i=2m,i>0}^{\infty}\nu_i^{(m)}u^iY_i,u^{2k+1-a}q_a]&=\sum_{i=2m,i>0}^{\infty}\nu_i^{(m)}u^{2k+1+b}a\left(b\frac{\partial}{\partial q_b}\right).
\end{align*}
The degree of the parameter $u$ for the term $\frac{\partial}{\partial t_l}$ is $2k+2l+2$. Now, observe that,
\begin{equation*}
\frac{(1+z^{-1})^2}{z^{-2}}f^{-2m-2}(z\frac{\d}{\d z}z^a)=\sum_{i=2m}^{\infty}\nu_i^{(m)}az^{-i+a}.
\end{equation*}
Then, the coefficient of the term $z^{-1}$ in the following Laurent series
$$\frac{(1+z^{-1})^2}{z^{-2}}f^{-2m-2}\left(z\frac{\d}{\d z}\phi_k(z)\right)\left(\frac{\d}{\d z}\phi_l(z)\right)$$
is the constant part of the coefficients (not including $u$) of $\frac{\partial}{\partial t_l}$ in the commutator. Now we compute the residue of the above series at $z=0$. Using the relation (\ref{vectorfield}), we have
\begin{align*}
&\operatorname{Res}_{z=0}\left\{\frac{(1+z^{-1})^2}{z^{-2}}f^{-2m-2}\left(z\frac{\d}{\d z}\phi_k(z)\right)\left(\frac{\d}{\d z}\phi_l(z)\right)\right\}\\
=&\operatorname{Res}_{z=0}\left\{f^{-2m-3}\left(D^{k+1}\cdot z\right)\left(D^{l+1}\cdot z\right)\left(\frac{\d}{\d z}f\right)\right\}\\
=&2(-1)^{k+l}\operatorname{Res}_{y=0}(2y)^{m+1}\left(\partial_y^{k+1}z\right)\left(\partial_y^{l+1}z\right)\left(\frac{\d}{\d z}y\right)\\
=&(-1)^{k+l}2^{m+2}[y^{-m-2}]\left(\partial_y^{k+2}v\right)\left(\partial_y^{l+2}v\right).
\end{align*}
By Eq.(\ref{v}), and the fact that $C_i=(2i+1)!!b_{2i+1}$, we have
\begin{align*}
\partial_y^{k+2}v
&=\sum_{i=0}^{\infty}2^{\frac{2i+1}{2}-k-2}(-1)^{k-i-1}(2k-2i+1)!!C_iy^{\frac{2i+1}{2}-k-2}\\
&\quad\quad+ \sum_{j=k+2}^{\infty} 2^jj(j-1)\dots (j-k-1)b_{2j}y^{j-k-2}.
\end{align*}
Then,
\begin{align*}
&(-1)^{k+l}2^{m+2}[y^{-m-2}]\left(\partial_y^{k+2}v\right)\left(\partial_y^{l+2}v\right)\\
=&\sum_{i=0}^{k-m+l+1} (-1)^{l+1-i}\frac{(2k-2i+1)!!}{(2k-2m-2i-1)!!}C_iC_{k-m+l+1-i}.
\end{align*}
\item The case $k\leq m$ is obvious. Recall the series $p_m(z)$ defined as Eq.(\ref{pmz}). We can easily deduce that, for $k>m$,
\begin{align*}
P_m^{(H)} \cdot \widetilde{\phi_k}(u,q)
&=u^{2k-2m-2}[z^0]\left( -\frac{z^{-3}}{(1+z^{-1})^3}f^{-2m-2} \phi_{k+1}(z)\right).
\end{align*}
From what we have discussed in the proof of (\emph{a}), we have
$$\left(f^{-2m-2} \phi_{k+1}(z)\right)_{+}=\sum_{n=0}^{k-m}c_n^{(k,m)}\phi_{k-m-n}(z).$$
Since, for $n\geq 2$, $\phi_n(z)$ can be expressed as
$$\phi_n(z)=z(1+z)^{n+1}\kappa_n(z),$$
where $\kappa_n(z)$ is a polynomial containing the constant term $1$, we have
\begin{align}\label{cn}
&[z^{0}] \left\{\frac{z^{-3}}{(1+z^{-1})^3} \left(z(1+z)^{n+1}\kappa_n(z)\right)\right\}=0.
\end{align}
Also,
$$[z^0]\left(-\frac{z^{-3}}{(1+z^{-1})^3}\phi_0(z)\right)=0.$$
Therefore,
\begin{equation*}
[z^0]\left( -\frac{z^{-3}}{(1+z^{-1})^3}f^{-2m-2} \phi_{k+1}(z)\right)= -c_{k-m-1}^{(k,m)}.
\end{equation*}
\end{enumerate}
\end{proof}
Combining the results of the above lemma will give us the expression (\ref{SimpleExpression}) of operator $\widehat{V}_m^{(H)}$. The commutator relation of $\{\widehat{V}_m^{(H)}\}$ follows directly from $\{V_m^{(H)}\}$. Hence we have proved Corollary \ref{VforPartition} in this subsection.

\vspace{6pt}
\noindent
{\bf Remark:} For the two functions $F_K(t)$ and $F_H(u,t)$, they are related by operator $\exp(W)$ as (cf. \cite{FP}, \cite{AG1}):
\begin{equation}\label{1}
\exp(F_H(u,t))=e^W\cdot \exp(F_K(t)),
\end{equation}
where
\begin{equation}\label{W}
W=-\sum_{k\geq 1} \frac{B_{2k}u^{2(2k-1)}}{2k(2k-1)}(\frac{\partial}{\partial t_{2k}}-\sum_{i\geq 0} t_i\frac{\partial}{\partial t_{i+2k-1}}+\frac{1}{2}\sum_{i+j=2k-2} (-1)^i\frac{\partial^2}{\partial t_i\partial t_j}).
\end{equation}
Note that the operator $W$ does not belong to the $\widehat{\mathfrak{gl}(\infty)}$ algebra.

Recall the Virasoro constraints $\widehat{L}_m$ for $\exp(F_K(t))$. Let
$$\widehat{V}_m^{(H)}=e^W\widehat{L}_me^{-W}.$$
Then $\{\widehat{V}_m^{(H)}\}$ form a set of Virasoro constraints for the linear Hodge partition function $\exp(F_H(u,t))$. However, the operator $W$ and $\widehat{L}_m$ does not seem to have a simple commutation relation like the operators $L_i$. Hence a direct computation of the conjugation might be very complicated, and it is not obvious that we can find a power series to represent such conjugation like what we do in the previous subsection. Since Eq.(\ref{KtoHodgeequation}) is transformed from Eq.(\ref{1}) under the variable change $t_k=\widetilde{\phi_k}(u,q)$ (\cite{LW}), if we change the variables back to $t_k$ in the Virasoro equations (\ref{maineq}), we will eventually obtain the Virasoro constraints $\widehat{V}_m^{(H)}$ as defined before. Therefore, our approach above is to consider the action of the operators $V_m^{(H)}$ after we switch to the variables $t_k$, and write down the explicit expression of operators $\widehat{V}_m^{(H)}$. Although this method looks like a detour, it allows us to obtain the simple version (\ref{SimpleExpression}) of $\widehat{V}_m^{(H)}$, and hence prove Corollary \ref{VforPartition}.

\section{Proof of Theorem \ref{recursionequi}}\label{Sec4}
In this section we first give a brief introduction to the polynomial recursion formula for linear Hodge integral. The equivalence relation between the Virasoro constraints and the recursion formula will be bridged using Theorem \ref{Kazarian}. In specific, we show the equivalence relation between Theorem \ref{Kazarian} and Theorem \ref{Thmrecursion} in Sect.\ref{Sec41}, and then connect the Virasoro constraints and Theorem \ref{Kazarian} in Sect.\ref{Sec42} and Sect.\ref{Sec43} for completion.

Consider the following symmetric polynomials of degree $3(2g-2+l)$,
\begin{equation}\label{symmetricpoly}
\widehat{H}_{g,l}(z_1,\dots ,z_l)=\sum_{n_1,\dots,n_l}\left<\tau_{n_1}\dots\tau_{n_l}\Lambda_g^{\vee}(1)\right>_{g,l}\prod_{i=1}^l\phi_{n_i}(z_i),
\end{equation}
where $\Lambda_g^{\vee}(1)=1+\sum_{i=1}^g(-1)^i\lambda_i$. Then we have
\begin{Thm}[\cite{EMS},\cite{MZ}]\label{Thmrecursion}
The polynomials (\ref{symmetricpoly}) satisfy the following topological recursion relation
\begin{multline}\label{recursion}
\left(2g-2+l+\sum_{i=1}^l\frac{1}{z_i+1}D_i\right)\cdot \widehat{H}_{g,l}(z_L)\\
=\sum_{i<j}\frac{(z_i+1)^2z_jD_i\cdot\widehat{H}_{g,l-1}(z_{L/\{j\}})-(z_j+1)^2z_iD_j\cdot\widehat{H}_{g,l-1}(z_{L/\{i\}})}{z_i-z_j}\\
+\sum_{i=1}^l\left[D_{u_1}D_{u_2}\cdot\widehat{H}_{g-1,l+1}(u_1,u_2,z_{L/\{i\}}) \right]_{u_1=u_2=z_i}\\
+\frac{1}{2}\sum_{i=1}^l\sum_{\substack{g_1+g_2=g\\J\cup K=L/\{i\}}}^{\mbox{\scriptsize{stable}}}D_i\cdot\widehat{H}_{g_1,|J|+1}(z_i,z_J)D_i\cdot\widehat{H}_{g_2,|K|+1}(z_i,z_K).
\end{multline}
Here, $L=\{1,2,\dots,l\}$ is the index set, and for a subset $J\subset L$, we write $z_J=(z_j)_{j\in J}$. The last summation of the above equation is taken over all partitions $g=g_1+g_2$ of the genus $g$ and disjoint union decomposition $J\cup K=L/\{i\}$ satisfy the stability conditions $2g_1-1+|J|>0$ and $2g_2-1+|K|>0$.
\end{Thm}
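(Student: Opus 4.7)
The plan is to derive Theorem \ref{Thmrecursion} from Kazarian's cut-and-join equation \eqref{Kazariancutandjoin} by translating an equation on the tau-function $\exp(F_H(u,q))$ into an equation on the symmetric polynomials $\widehat{H}_{g,l}$. The bridge is the power-sum substitution $q_m \mapsto \sum_{i=1}^l z_i^m$: since $\widetilde{\phi}_{k}(u,q)$ is obtained from $\phi_{k}(u,z)$ by replacing $z^m$ with $q_m$, the substitution sends $\widetilde{\phi}_{k}(u,q)\mapsto \sum_{i=1}^l \phi_{k}(u,z_i)$, and hence extracting from $F_H(u,q)$ the connected part multilinear in each ``$z_i$-block'' recovers
\[
\widehat{H}(u;z_1,\dots,z_l) := \sum_{g\geq 0} u^{2g-2+l}\widehat{H}_{g,l}(z_1,\dots,z_l).
\]

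The first step is to compute how each piece of $\widetilde{M}$ acts after this substitution. The linear piece $\sum(i+j+k)q_iq_j\partial_{q_{i+j+k}}$ of $M_k$ turns into a sum over $i$ of differential operators in $z_i$ built from the vector field $\widehat{D}_i=(u+z_i)^2 z_i\partial_{z_i}$. The quadratic ``cut'' piece $ij\,q_{i+j-k}\partial_{q_i}\partial_{q_j}$ splits according to whether the two $q$-derivatives hit the same $z$-block or different ones: the ``different-blocks'' case, after partial-fraction reorganization, yields the antisymmetric kernel
\[
\sum_{i<j}\frac{(z_i+1)^2 z_j\, D_i\widehat{H}_{g,l-1}-(z_j+1)^2 z_i\, D_j\widehat{H}_{g,l-1}}{z_i-z_j},
\]
while the ``same-block'' case with $u_1=u_2=z_i$ produces the pinching contribution $D_{u_1}D_{u_2}\widehat{H}_{g-1,l+1}$. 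The cubic ``join'' piece $q_iq_jq_{-k-i-j}$, combined with the connected-correlator expansion of $\exp(F_H)$, produces the stable-splitting quadratic sum on the last line of \eqref{recursion}.

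The second step is to match the universal prefactor and the counterterms. The operator $\tfrac13 u^{-2}\partial_u$ acts as $\tfrac13(2g-2+l)u^{-2}$ on the $u^{2g-2+l}$-component, reproducing the Euler factor $2g-2+l$ on the left of \eqref{recursion}, while the Virasoro corrections $-\tfrac43 u^{-3}L_0-u^{-4}L_{-1}$ together with the monomials $\tfrac14 u^{-2}q_2+\tfrac13 u^{-3}q_3+\tfrac18 u^{-4}q_4$ are arranged to cancel the unstable contributions from $M_{-1},\dots,M_{-4}$, so that what survives from the linear piece is exactly $\sum_i\frac{1}{z_i+1}D_i$ acting on $\widehat{H}_{g,l}$. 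For the detailed calculation I would work in the coordinate $v=z/(1+z)$, in which $\widehat{D}$ simplifies and the vector field becomes $-\partial_y$ by \eqref{vectorfield}, streamlining both the residue computations and the identification of $\widehat{D}_i$-powers with $\phi_n$-monomials.

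The main obstacle is the bookkeeping of the quadratic ``different-block'' contribution: one must verify that the naive symmetric expression arising from $\sum ij\,q_{i+j-k}\partial_{q_i}\partial_{q_j}$ reorganizes into the antisymmetric partial fraction appearing in \eqref{recursion}, and that the apparent pole at $z_i=z_j$ is cancelled by the numerator. The cleanest route is to represent $\sum ij\, z_a^i z_b^j$ as a contour integral against a kernel of the form $\tfrac{dw}{(w-z_a)(w-z_b)}$, after which antisymmetry and polynomiality become manifest. Equally delicate is the precise matching of the unstable base cases $(g',l')=(0,1),(0,2)$ excluded from the stable-splitting sum with the residual $L_{-1}, L_0, q_2, q_3, q_4$ terms in $\widetilde{M}$; this is a term-by-term check that is essentially forced once the substitution is carried out, but must be verified carefully to confirm the stability conventions.
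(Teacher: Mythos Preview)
Your overall plan---derive the polynomial recursion from Kazarian's cut-and-join equation by extracting the $l$-point correlators via $q_m\mapsto\sum_i z_i^m$---is exactly the route taken in Sect.~\ref{Sec41}. The paper rewrites \eqref{Kazariancutandjoin} as $(A_1+A_2)\cdot\exp(F_H)=(A_3+A_4)\cdot\exp(F_H)$ with
\[
A_1=\tfrac13(u\partial_u+L_0),\quad A_2=u^{-1}L_{-1}+L_0,\quad A_3=\mathcal C+\mathcal J,\quad A_4=\text{(polynomial in }q\text{)},
\]
and then matches each piece to a term of \eqref{recursion}. So the strategy is right; the problem is that your attribution of which operator pieces produce which terms of the recursion is largely inverted.

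Concretely: (i) The Euler factor $2g-2+l$ does \emph{not} come from $\tfrac13 u^{-2}\partial_u$ alone. In $F_H(u,q)$ the total $(u,q)$-degree of a monomial is $3(2g-2+n)$, so one needs $A_1=\tfrac13(u\partial_u+L_0)$; see \eqref{term:2g-2+n}. (ii) The term $\sum_i\frac{1}{z_i+1}D_i$ on the left of \eqref{recursion} does not survive from the $M_k$-part at all: it comes from $A_2=u^{-1}L_{-1}+L_0$, via $(z^2\partial_z+z\partial_z)z^k=\frac{1}{1+z}D\cdot z^k$. (iii) The first-order ``cut'' piece $\mathcal C=\sum(i+j+k)q_iq_j\partial_{q_{i+j+k}}$ is what produces the antisymmetric kernel on the right of \eqref{recursion}, by the elementary identity displayed after \eqref{cut}; it does \emph{not} give a sum of single-variable vector fields. (iv) The second-order ``join'' piece $\mathcal J=\sum ij\,q_{i+j-k}\partial_{q_i}\partial_{q_j}$, acting on $\exp(F_H)$, splits according to whether both derivatives hit the same copy of $F_H$ (giving the pinching term $\widehat H_{g-1,l+1}$) or different copies (giving the stable-splitting product); there is no ``same-block vs.\ different-block in $z$'' dichotomy here, and $\mathcal J$ does \emph{not} contribute to the antisymmetric kernel. (v) The cubic piece $\sum q_iq_jq_{-k-i-j}$ carries no derivatives, so it cannot produce the stable-splitting sum; it is absorbed into $A_4$ and fixes the initial values $\langle\tau_0^3\rangle$, $\langle\tau_1\rangle$, $\langle\tau_0\lambda_1\rangle$.

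Because of (iii)--(iv), the ``main obstacle'' you identify and your contour-integral plan are aimed at the wrong operator: the antisymmetric kernel arises from $\mathcal C$ by a one-line partial-fraction identity, not from the second-order piece. If you reassign the roles as above, the remaining checks are routine and no residue machinery is needed.
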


\vspace{6pt}
\noindent
{\bf Remark:} In their original statements, the operator $D$ in the above theorem is in the form $t^2(t-1)\frac{\d}{\d t}$. In our case, we substitute $t$ by $z+1$. This does not effect the validity of the theorem.

\subsection{Equivalence relation between Kazarian's formula and the polynomial recursion }\label{Sec41}
In this subsection ,we give a brief explanation that Kazarian's formula and the polynomial recursion formula of linear Hodge integral in Theorem \ref{Thmrecursion} are actually equivalent. The equivalence relation between these two theorems is already guaranteed, because they are both equivalent to the cut-and-join equation (\ref{cutandjoin}) of the simple Hurwitz numbers (cf. \cite{MK},\cite{MZ}). But here, as mentioned before, we do not consider the theory of Hurwitz numbers. Instead, we show their equivalence relation in a more straightforward way by indicating the corresponding relation between the action of the operators in Eq.(\ref{Kazariancutandjoin}) and the terms in Eq.(\ref{recursion}).

\vspace{6pt}
\noindent
{\bf Remark:} For the function $F_H(u,q)$ defined in (\ref{FHuq}), after the rescaling $q_k\rightarrow u^kq_k$, we have
\begin{equation*}
\left.F_H(u,q)\right\vert_{q_k\rightarrow u^kq_k}=\sum_{g,d_i,n} \left< \tau_{d_1}\dots \tau_{d_n}\Lambda_g^{\vee}(1)\right>_{g,n}u^{3(2g-2+n)}\prod_{i=1}^n \widetilde{\phi_{d_i}}(u=1,q).
\end{equation*}
The contribution of the terms with the fixed genus $g$ and $n=l$ in the above equation is:
\begin{equation}\label{symmetricpolybefore}
\sum_{d_1,\dots,d_l} \left< \tau_{d_1}\dots \tau_{d_l}\Lambda_g^{\vee}(1)\right>_{g,l}u^{3(2g-2+l)}\prod_{i=1}^l \widetilde{\phi_{d_i}}(u=1,q).
\end{equation}
If we apply the operator
$\frac{1}{l!}\prod_{j=1}^l\left(\sum_{k=1}^{\infty}z_j^k\frac{\partial}{\partial q_k}\right)$
on Eq.(\ref{symmetricpolybefore}), and set $u=1$, then it gives us the symmetric polynomial $\widehat{H}_{g,l}(z_1,\dots ,z_l)$ in (\ref{symmetricpoly})， (such operators appeared in \cite{MZo}). Hence, in order to find equivalence relation between the Virasoro constraints and the recursion formula (\ref{recursion}) for linear Hodge integrals, the function $F_H(u,q)$ is a much better choice than the linear Hodge partition function $F_H(u,t)$.
\begin{flushright}
$\Box$
\end{flushright}

Let us transform (\ref{Kazariancutandjoin}) into
\begin{multline}\label{re-arrange}
\left(\frac{1}{3}u\frac{\partial}{\partial u}+\frac{1}{3}L_0+u^{-1}L_{-1}+L_0\right)\cdot \exp(F_H(u,q))\\
=\left(\vphantom{\frac{1}{2}}u^3M_0+4u^{2}M_{-1}+6uM_{-2}+4M_{-3}+u^{-1}M_{-4} \right.\\
\left.+\frac{1}{4}uq_2+\frac{1}{3}q_3+\frac{1}{8}u^{-1}q_4\right)\cdot \exp(F_H(u,q)).
\end{multline}
We will be looking at the following four operators, namely,
\begin{align*}
A_1 &= \frac{1}{3}u\frac{\partial }{\partial u}+\frac{1}{3}L_0, \quad\quad
A_2 = u^{-1}L_{-1}+L_0,\\
A_3 &= u^3M_0+4u^{2}M_{-1}+6uM_{-2}+4M_{-3}+u^{-1}M_{-4}-\frac{2}{3}q_1^3-\frac{1}{2}u^{-1}q_1^2q_2,\\
A_4 &= \frac{2}{3}q_1^3+\frac{1}{2}u^{-1}q_1^2q_2+\frac{1}{4}uq_2+\frac{1}{3}q_3+\frac{1}{8}u^{-1}q_4.
\end{align*}
Note that the first two terms of $A_4$ are the polynomial part of $4M_{-3}+u^{-1}M_{-4}$.

{\bf Operator $A_1$:}
First, we can transform Eq.(\ref{condition}) into $2j+\sum_{i=1}^n (2d_i+1)=3(2g-2+n).$
Suppose each term in $F_H(u,q)$ is in the form $u^xq_{b_1}\dots q_{b_n}$ with its corresponding coefficient. Then $2j+\sum_{i=1}^n (2d_i+1)=x+\sum_{i=1}^n b_i,$
which gives us $\frac{1}{3}(x+\sum_{i=1}^n b_i)=2g-2+n.$ Hence
\begin{equation}\label{term:2g-2+n}
\frac{1}{3}(u\frac{\partial }{\partial u}+L_0)\cdot F_H(u,q)= \sum_{g,n}(2g-2+n)\sum \left< \tau_{d_1}\dots \tau_{d_n}\lambda_j\right>_{g,n}u^{2j}\prod_{i=1}^n \widetilde{\phi_{d_i}}(u,q).
\end{equation}

{\bf Operator $A_2$:} We can see the corresponding relation by comparing the following two terms:
$$(u^{-1}L_{-1}+L_0)\cdot q_k= u^{-1}kq_{k+1}+kq_k,$$
$$\frac{1}{1+z}D\cdot z^k=\left(z^2\frac{\d}{\d z}+z\frac{\d}{\d z}\right)\cdot z^k=kz^{k+1}+kz^k.$$

{\bf Operator $A_3$:} Let $A_3=\mathcal{C}+\mathcal{J}$, where $\mathcal{C}$ is the ``cut'' part consisting of only first order differential operators, and $\mathcal{J}$ is the ``join'' part consisting of second order differential operators. Keep in mind that on the right hand side of Eq.(\ref{recursion}), there are three parts. The action of operator $\mathcal{C}$ on a single variable $q_m$ can be described as
\begin{multline}\label{cut}
\mathcal{C}\cdot q_m=m\bigg(\sum_{a+b=m}q_aq_b +4\sum_{a+b=m+1}q_aq_b +6\sum_{a+b=m+2}q_aq_b \\
 +4\sum_{a+b=m+3}q_aq_b +\sum_{a+b=m+4}q_aq_b  \bigg),
\end{multline}
which corresponds to the following terms in the first part of the right hand side of Eq.(\ref{recursion}):
\begin{multline*}
\frac{(z_i+1)^2z_jD_i\cdot z_i^m-(z_j+1)^2z_iD_j\cdot z_j^m}{z_i-z_j}\\
=m\bigg(\sum_{a+b=m}z_i^az_j^b +4\sum_{a+b=m+1}z_i^az_j^b +6\sum_{a+b=m+2}z_i^az_j^b+4\sum_{a+b=m+3}z_i^az_j^b +\sum_{a+b=m+4}z_i^az_j^b \bigg)
\end{multline*}
And, for $a,b\geq 1$, the action of operator $\mathcal{J}$ on $q_aq_b$ is in the form
\begin{equation}\label{join}
\mathcal{J}\cdot q_aq_b=ab\left(q_{a+b}+4q_{q+b+1}+6q_{q+b+2}+4q_{q+b+3}+q_{q+b+4}\right),
\end{equation}
which corresponds to the following terms in the second part:
\begin{equation*}
\left[D_{u_1}D_{u_2}\cdot u_1^au_2^b \right]_{u_1=u_2=z_i}
=ab\left(z_i^{a+b}+4z_i^{q+b+1}+6z_i^{q+b+2}+4z_i^{q+b+3}+z_i^{q+b+4}\right).
\end{equation*}
Note that $\mathcal{J}$ is a second order differential operator. We must take into account the following terms:
\begin{multline*}
\left(q_{a+b}+4q_{q+b+1}+6q_{q+b+2}+4q_{q+b+3}+q_{q+b+4}\right)\left(a\frac{\partial}{\partial q_a}q_a\right)\left(b\frac{\partial}{\partial q_a}q_b\right),
\end{multline*}
which corresponds to the following terms in the third part:
\begin{align*}
&\left(\vphantom{z_i^b}D_i\cdot z_i^a\right)\left(D_i\cdot z_i^b\right)\\
=&\left(z_i^{a+b}+4z_i^{q+b+1}+6z_i^{q+b+2}+4z_i^{q+b+3}+z_i^{q+b+4}\right)\left(z_i\frac{\partial}{\partial z_i}z_i^a\right)\left(z_i\frac{\partial}{\partial z_i}z_i^b\right).
\end{align*}

{\bf Operator $A_4$:}  Since all the contributions $2g-2+n=1$ in the recursion formula are from $(g,n)=(0,3)$ or $(1,1)$, the initial equation becomes
\begin{equation*} \left(A_1+A_2\right)\cdot \left( \left<\tau_0^3\right>_{0,3}q_1^3+\left<\tau_1\right>_{1,1}\widetilde{\phi_1}(u,q)-\left<\tau_0\lambda_1\right>_{1,1}u^2q_1 \right)+A_4=0.
\end{equation*}
Solving the above equation we get
$$\left<\tau_0^3\right>_{0,3}=\frac{1}{6}, \quad \left<\tau_1\right>_{1,1}=\left<\tau_0\lambda_1\right>_{1,1}=\frac{1}{24}$$
as expected.

Until now, we have established the corresponding relation between the action of the operators in Eq.(\ref{Kazariancutandjoin}) and the terms in Eq.(\ref{recursion}).

\subsection{From the Virasoro constraints and Dilaton equation to Kazarian's formula}\label{Sec42}
In this subsection, we prove Kazarian's formula using the Virasoro constraints and Dilaton equation for Hodge tau-functions in Theorem \ref{VforHodge}. A straightforward computation shows that
\begin{equation}\label{def:M4}
M_{-4}=\sum_{m=-1}^{\infty}q_{2m+4}L_{2m}+\sum_{k=1}^{\infty}L_{-2k-4}^{odd}\left(2k\frac{\partial}{\partial q_{2k}}\right),
\end{equation}
where
$$
L_{-2k}^{odd} := \sum_{\substack{j>0\\j \mbox{ \scriptsize{is odd}}}} q_{2k+j} j \frac{\partial}{\partial q_{j}} + \frac{1}{2}\sum_{\substack{i+j=2k\\j \mbox{ \scriptsize{is odd}}}} ij \frac{\partial^2}{\partial q_{i}\partial q_{j}} .
$$
Since
$$\left(L_{2m}-(2m+3)\frac{\partial}{\partial q_{2m+3}}+\frac{1}{8}\delta_{m,0}\right)\cdot \exp(F_K(q))=2k\frac{\partial}{\partial q_{2k}}\exp(F_K(q))=0,$$
for $m\geq -1, k\geq 1$, we can easily deduce that
\begin{equation}\label{M4}
\left(M_{-4}-L_{-1}+\frac{1}{8}q_4\right)\cdot \exp(F_K(q))=0.
\end{equation}
Motivated by above argument, we are able to give the parallel results for the Hodge tau function. Let
\begin{equation*}
\widetilde{P} := \sum_{i=3}^{\infty} \left[z^i\right]\left( -\frac{z^3}{(1+z)^3}+\frac{z^2}{(1+z)^2}\eta \right) u^{i-3}\frac{\partial}{\partial q_i}.
\end{equation*}
First we have
\begin{Prop} \label{F2a}
\begin{multline} \label{conj}
e^{\widetilde{P}}e^U \left(M_{-4}-L_{-1}+\frac{1}{8}q_4\right)  e^{-U}
e^{-\widetilde{P}} \\
=\sum_{m=-1}^{\infty} \left(e^U e^P q_{2m+4} e^{-P} e^{-U} \right) \cdot V_{2m}^{(H)}+\sum_{k=1}^{\infty}  \left(e^U e^P L_{-2k-4}^{odd}  e^{-P} e^{-U}   \right) \cdot E^{(k)} ，
\end{multline}
where $E^{(k)}$ is defined in Eq. \eqref{epsilon}
\end{Prop}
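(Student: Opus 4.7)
The plan is to lift the Kontsevich--Witten identity (\ref{M4}) to the Hodge side by conjugation, choosing $\widetilde{P}$ precisely so that it absorbs the discrepancy between the two natural outer conjugations $e^Ue^P(\cdot)e^{-P}e^{-U}$ and $e^{\widetilde{P}}e^U(\cdot)e^{-U}e^{-\widetilde{P}}$. First I would reorganize the operator $M_{-4}-L_{-1}+\frac{1}{8}q_4$ purely in terms of the Kontsevich--Witten Virasoro generators $V_{2m}^{(K)}=L_{2m}-(2m+3)\tfrac{\partial}{\partial q_{2m+3}}+\frac{1}{8}\delta_{m,0}$ and the "even" first-derivatives $2k\tfrac{\partial}{\partial q_{2k}}$. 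Combining (\ref{def:M4}) with the splitting of $L_{-1}$ into its odd- and even-index first-derivative pieces, one obtains
$$M_{-4}-L_{-1}+\tfrac{1}{8}q_4=\sum_{m=-1}^{\infty}q_{2m+4}V_{2m}^{(K)}+\sum_{k\geq 1}\bigl(L_{-2k-4}^{odd}-q_{2k+1}\bigr)\bigl(2k\tfrac{\partial}{\partial q_{2k}}\bigr).$$

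Second, I would conjugate by $e^Ue^P$. For the even-derivative factor, $P$ only involves odd derivatives and the second-order part $Y$ of $U$ commutes with every first-order derivative, giving $e^Ue^P\bigl(2k\tfrac{\partial}{\partial q_{2k}}\bigr)e^{-P}e^{-U}=e^X\bigl(2k\tfrac{\partial}{\partial q_{2k}}\bigr)e^{-X}=E^{(k)}$, the identification already used in the proof of Proposition \ref{q2m}. For the $V_{2m}^{(K)}$ factor, recall from Sect.\ref{Sec31} that $V_{2m}^{(H)}=e^U(L_{2m}+P_m^{(2)}+\frac{1}{8}\delta_{m,0})e^{-U}$, where $P_m^{(2)}$ differs from the direct conjugate $P_m^{(1)}=-\sum_k(2m+2k+3)b_{2k+1}u^{2k}\tfrac{\partial}{\partial q_{2m+2k+3}}$ by the correction
$$\Delta_m:=P_m^{(2)}-P_m^{(1)}=\sum_{k\geq 1}(2m+2+2k)b_{2k}u^{2k-1}\tfrac{\partial}{\partial q_{2m+2+2k}},$$
which is supported only on even indices. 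Hence $e^Ue^PV_{2m}^{(K)}e^{-P}e^{-U}=V_{2m}^{(H)}-e^U\Delta_me^{-U}$, and under the bookkeeping of Lemma \ref{pmH+LmH} each $e^U\Delta_me^{-U}$ decomposes as a linear combination of $E^{(k)}$'s. Multiplying by the appropriate $q$-factor and summing over $m$, the $\Delta_m$ corrections combine with the residual $-q_{2k+1}$ from the first step to produce exactly the coefficient $e^Ue^PL_{-2k-4}^{odd}e^{-P}e^{-U}$ in front of $E^{(k)}$.

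Third, I would identify the outer conjugation $e^{\widetilde{P}}e^U(\cdot)e^{-U}e^{-\widetilde{P}}$ with $e^Ue^P(\cdot)e^{-P}e^{-U}$ when applied to $M_{-4}-L_{-1}+\frac{1}{8}q_4$. Expanding both via Lemma \ref{pmH+LmH}, this reduces to a single formal power-series identity; the specific combination $-z^3/(1+z)^3+z^2\eta/(1+z)^2$ appearing in $\widetilde{P}$ is engineered so that this identity holds, with the key ingredients being $\eta\eta'=z/(1+z)^2$ (immediate from (\ref{eta})) and $\sum_{i\geq 1}(-1)^ib_i\eta^i=-z/(1+z)$ (immediate from the inverse relation between $h$ and $\eta$ via (\ref{h}) and (\ref{def:eta})).

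The main obstacle is this last generating-series comparison. One must carefully track how the outer $e^U$ commutes past the first-order operator $P$---thereby introducing the corrections $\Delta_m$---while simultaneously keeping track of the residual $-q_{2k+1}$ from the first step, and verify that the two effects collapse into a single clean coefficient agreeing with the defining series of $\widetilde{P}$. Once this matching is in place, the operator identity claimed in the proposition follows immediately from the algebraic reorganization of the first paragraph, and in turn delivers (upon acting on $\exp(F_H(u,q))$, where both $V_{2m}^{(H)}$ and $E^{(k)}$ annihilate) exactly the nontrivial Hodge-side constraint needed for the subsequent derivation of Kazarian's formula in Sect.\ref{Sec42}.
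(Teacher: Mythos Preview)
Your decomposition in the first step is correct and matches the paper exactly. The divergence comes in how you handle the outer conjugation, and there the paper takes a much shorter route that you have missed.

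The paper's entire proof consists of one observation: $\widetilde{P}$ is \emph{itself} an $e^U$-conjugate, namely
\[
\widetilde{P}=e^U\Bigl(\sum_{i\geq 2}(-1)^ib_iu^{i-1}\tfrac{\partial}{\partial q_{i+2}}\Bigr)e^{-U}=:e^URe^{-U},
\]
verified by a single power-series identity of the type in Lemma~\ref{pmH+LmH}. This immediately gives $e^{\widetilde{P}}e^U=e^Ue^R$ as an \emph{exact operator identity}, so the left side of \eqref{conj} equals $e^Ue^R(\cdot)e^{-R}e^{-U}$. Now $R$ is exactly the shift needed so that $e^RV_{2m}^{(K)}e^{-R}=L_{2m}+P_m^{(2)}+\tfrac{1}{8}\delta_{m,0}$ on the nose (the $i=1$ term of $P_m^{(2)}$ is $-(2m+3)\partial_{q_{2m+3}}$, precisely the extra piece in $V_{2m}^{(K)}$), whence $e^Ue^RV_{2m}^{(K)}e^{-R}e^{-U}=V_m^{(H)}$ directly, with no correction $\Delta_m$ ever appearing.

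Your plan instead conjugates by $e^Ue^P$, which forces you to carry the corrections $\Delta_m=P_m^{(2)}-P_m^{(1)}$, and then in step~3 you must argue that $e^{\widetilde{P}}e^U(\cdot)e^{-U}e^{-\widetilde{P}}$ and $e^Ue^P(\cdot)e^{-P}e^{-U}$ agree on the specific operator in question. But this is not true: since $R\neq P$, the two outer conjugations differ already on each $q_{2m+4}$ (by the constant $b_{2m+2}u^{2m+1}$), so the identity you hope for in step~3 does not hold even before taking the $\Delta_m$ bookkeeping into account. One can in principle chase all these residual constants and even-derivative pieces and check they cancel against each other, but you have not done so, and the ``single formal power-series identity'' you allude to is never written down. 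The clean way out is the paper's: replace $P$ by $R$ from the start, and the need for step~3 disappears entirely.
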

\begin{proof}
We claim that $$\widetilde{P}=e^U \left(\sum_{i=2}^{\infty}(-1)^ib_iu^{i-1}\frac{\partial}{\partial q_{i+2}}\right) e^{-U},$$
then the formula follow from
$$
e^{\widetilde{P}}e^U V^{(K)}_{2m}e^{-U}
e^{-\widetilde{P}}= V^{(H)}_{2m} \quad \text{and} \quad
e^{\widetilde{P}}e^U \left(2k \frac{\partial}{\partial_{q_{2k}}}\right)e^{-U}
e^{-\widetilde{P}}= E^{(k)} .
$$
The claim is proved by
\begin{align*}
&\exp\left(-\sum_{m=1}^{\infty}a_m(z^{1+m}\frac{\d}{\d z}+mz^m)\right)\cdot \sum_{i=2}^{\infty}(-1)^ib_iz^{i+2}
=-\frac{z^3}{(1+z)^3}+\frac{z^2}{(1+z)^2}\eta.\\
\end{align*}
\end{proof}
 
The following formula is a direct consequence of Virasoro constraints \eqref{maineq}, Proposition \ref{q2m} and the above proposition:
\begin{equation}\label{F1}
e^{\widetilde{P}}e^U\left(M_{-4}-L_{-1}+\frac{1}{8}q_4\right) e^{-U}e^{-\widetilde{P}} \cdot \exp(F_H(u,q))=0.
\end{equation}
Next, we aim to compute the conjugation on the left hand side of the equation \eqref{conj} using the following formulas
\begin{align*}
[q_n, L_k]&= -n\alpha_{k-n},\quad [\frac{\partial}{\partial q_n}, L_k]=\alpha_{n+k}, \quad [\frac{\partial}{\partial q_n}, M_k]=L_{n+k},\\
[L_m,L_n]& =(m-n)L_{m+n}+\frac{1}{12}(m^3-m)\delta_{m+n,0},\\
[L_n,M_k]& =(2n-k)M_{n+k} +\frac{n^3-n}{12}\alpha_{n+k},
\end{align*}
where
\[
\alpha_n=
\begin{cases}
q_{-n} & n<0 \\
0 &  n=0 \\
n\frac{\partial}{\partial q_n} &  n>0.
\end{cases}
\]

We show that
\begin{Prop}\label{F2}
\begin{multline*}
e^{\widetilde{P}}e^U\left(M_{-4}-L_{-1}+\frac{1}{8}q_4\right) e^{-U}e^{-\widetilde{P}}\\
=u^4\widetilde{M}+\frac{1}{3}uL_0+\sum_{i=2}^{\infty}[z^i]\frac{z^2}{(1+z)^3}u^{i-2}\frac{\partial}{\partial q_i}+\frac{1}{24}u
-\frac{1}{2}E^{(1)},
\end{multline*}
where $E^{(1)}$ is defined in Eq.\eqref{epsilon}.
\end{Prop}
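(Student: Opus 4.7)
The plan is to compute the conjugation on the left-hand side directly, splitting $M_{-4}-L_{-1}+\tfrac{1}{8}q_4$ into three pieces and handling each separately, first under $e^U(\cdot)e^{-U}$ and then under $e^{\widetilde{P}}(\cdot)e^{-\widetilde{P}}$. Using the ad-formula \eqref{AM1} together with $U=\sum_{m\geq 1}a_mu^mL_m$ and the commutator $[L_n,M_k]=(2n-k)M_{n+k}+\tfrac{n^3-n}{12}\alpha_{n+k}$, the iterated application of $\operatorname{ad}_U$ on $M_{-4}$ produces higher-index $M$-operators with $u$-coefficients plus first-order $\alpha$-corrections; similarly, $[L_n,L_{-1}]=(n+1)L_{n-1}+\tfrac{n^3-n}{12}\delta_{n-1,0}$ and $[L_n,q_4]=4\alpha_{n-4}$ control the conjugations of $L_{-1}$ and $q_4$, producing Virasoro operators (with a central-extension constant coming from $n=2$) and polynomial shifts in $q_3,q_2,q_1$ together with derivative corrections.

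Next, I would encode the $u$-coefficients arising in $e^UM_{-4}e^{-U}$ as power series in a single variable $z$, following the organizing strategy of Lemma \ref{pmH+LmH} and Proposition \ref{4}. The expected outcome is that the coefficient of $u^jM_{-4+j}$ for $0\leq j\leq 4$ collapses to $\binom{4}{j}$, reproducing the Pascal pattern $1,4u,6u^2,4u^3,u^4$ appearing in $u^4\widetilde{M}$, with higher-index $M_j$-contributions either vanishing or being absorbed into first-order corrections. The polynomial shifts from $\tfrac{1}{8}e^Uq_4e^{-U}$ are expected to contribute exactly the $\tfrac{1}{4}u^2q_2+\tfrac{1}{3}uq_3+\tfrac{1}{8}q_4$ pieces of $u^4\widetilde{M}$; matching these requires identifying the first few $a_m$ via Eq.\eqref{ham} and Eq.\eqref{def:eta}.

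Then I would apply the outer conjugation $e^{\widetilde{P}}(\cdot)e^{-\widetilde{P}}$. Since $\widetilde{P}$ is a pure first-order derivative, $\operatorname{ad}_{\widetilde{P}}$ acts as a derivation on $q$-variables: $[\widetilde{P},M_k]$ generates Virasoro-type operators via the $q_iq_j\partial_{q_{i+j+k}}$ part of $M_k$, $[\widetilde{P},L_n]$ generates first-order derivatives, and $[\widetilde{P},q_k]$ generates constants. The coefficients of $\widetilde{P}$, namely $[z^i]\bigl(-\tfrac{z^3}{(1+z)^3}+\tfrac{z^2}{(1+z)^2}\eta\bigr)$, are precisely engineered so that the resulting corrections combine with the previous pieces to yield the clean form on the right-hand side: the Virasoro term $\tfrac{1}{3}uL_0$, the derivative series $\sum_{i\geq 2}[z^i]\tfrac{z^2}{(1+z)^3}u^{i-2}\partial_{q_i}$, the constant $\tfrac{u}{24}$, and the first-order piece $-\tfrac{1}{2}E^{(1)}$, the last of which is recognizable from the characteristic $\eta^2$-series occurring in $E^{(1)}$ per Eq.\eqref{epsilon}.

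The main obstacle will be the power-series bookkeeping needed to verify that the iterated commutators assemble into the Pascal pattern for the $M$-operators and into the single $\tfrac{1}{3}uL_0$ term for the Virasoro part, rather than generating an unmanageable tail of Virasoro and $M$-operators. This should reduce to identities among $\eta$, the numbers $\{a_m\}$, and the Lambert-W relation \eqref{hLambert}, in direct analogy with the computations in Proposition \ref{4}. A useful consistency check, already visible from F1, the dilaton equation \eqref{dilaton3}, Kazarian's formula \eqref{Kazariancutandjoin}, and Proposition \ref{q2m}, is that the right-hand side must annihilate $\exp(F_H(u,q))$; once the $M$- and $L$-coefficients are matched in power series, the remaining first-order derivative corrections can be collected and compared with $\sum_i[z^i]\tfrac{z^2}{(1+z)^3}u^{i-2}\partial_{q_i}-\tfrac{1}{2}E^{(1)}$ by direct inspection of coefficients.
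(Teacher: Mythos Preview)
Your overall strategy---split into the three pieces, conjugate first by $e^{U}$ and then by $e^{\widetilde P}$, and encode the resulting coefficients as power series in one variable in the spirit of Lemma~\ref{pmH+LmH}---is exactly what the paper does.  Two points in your plan, however, are misattributed and would cause trouble if you tried to carry them out as stated.

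First, the polynomial pieces $\tfrac14u^2q_2+\tfrac13uq_3+\tfrac18q_4$ do \emph{not} arise from $\tfrac18e^{U}q_4e^{-U}$.  In fact $e^{U}q_4e^{-U}$ is an infinite series in the $\alpha_i$ (its generating series is $\tfrac18\eta^{-4}$, see Appendix~A.3), not a finite polynomial.  What actually happens is that the $\tfrac{n^3-n}{12}\alpha_{n+k}$-corrections in $e^{U}M_{-4}e^{-U}$ contribute $\tfrac14u^2q_2+\tfrac13uq_3+\tfrac18q_4-e^{U}(\tfrac18q_4)e^{-U}$, and the last term then cancels against the separate $\tfrac18e^{U}q_4e^{-U}$ piece.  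So the cancellation is between $M_{-4}$ and $q_4$, not a feature of $q_4$ alone; if you expect $e^{U}q_4e^{-U}$ itself to collapse to those three monomials you will be stuck.

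Second, there is no tail of higher $M_j$ to be ``absorbed''.  The $M$-part of $e^{U}M_{-4}e^{-U}$ is governed by the series $\exp\bigl(-\sum a_kz^{1+k}\tfrac{\d}{\d z}+2\sum ka_kz^k\bigr)\cdot z^{-4}$, and the point is that this series equals the \emph{polynomial} $(1+z^{-1})^4$ exactly (this is Eq.~\eqref{14641}, proved in Appendix~A.3 via the $l_n,m_k$ correspondence).  That identity is the real reason the Pascal pattern terminates at $M_0$; you should prove it directly rather than hope that an infinite $M$-tail disappears into first-order terms.  Once these two points are in place, the remaining $[\widetilde P,\,\cdot\,]$ computations go through as you outline, and the net Virasoro contribution is $-uL_0-L_{-1}$, which compared with the $-\tfrac43uL_0-L_{-1}$ already sitting inside $u^4\widetilde M$ leaves exactly the extra $\tfrac13uL_0$.
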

\begin{proof}
First we consider $e^UM_{-4}e^{-U}$. By the formula of $[L_n,M_k]$, we know that there are two parts in $e^UM_{-4}e^{-U}$. For the ``cut-and-join'' part of $e^UM_{-4}e^{-U}$, using the method in the proof of Lemma \ref{pmH+LmH}, we can see that it is sufficient to consider the following power series:
\begin{align}\label{14641}
\exp(-\sum a_kz^{1+k}\frac{\d}{\d z}+2\sum ka_kz^{k})\cdot z^{-4}\nonumber
=1+\frac{4}{z}+\frac{6}{z^2}+\frac{4}{z^3}+\frac{1}{z^4}.
\end{align}
This implies that the ``cut-and-join'' part of $e^UM_{-4}e^{-U}$ is exactly
\begin{equation}\label{uA3}
uA_3=u^4M_0+4u^3M_{-1}+6u^2M_{-2}+4uM_{-3}+M_{-4}.
\end{equation}
The remaining part of $e^UM_{-4}e^{-U}$ is (see Appendix A.3)
\begin{equation}\label{Q3}
\frac{1}{4}u^2q_2+\frac{1}{3}uq_3+\frac{1}{8}q_4-e^U(\frac{1}{8}q_4)e^{-U}.
\end{equation}
The operator $e^U(-L_{-1})e^{-U}$ follows from Lemma \ref{pmH+LmH}, that is,
\begin{equation}\label{L1}
e^U(-L_{-1})e^{-U}=-\sum_{i=-1}^{\infty}[z^i]\left( \frac{(1+z)^2}{z^2}\eta \right) u^{i+1}L_i.
\end{equation}
The operator $e^{\widetilde{P}}(uA_3)e^{-\widetilde{P}}$ consists of three parts. The ``cut-and-join'' part is still $uA_3$. The Virasoro operator part is $[\widetilde{P}, uA_3]$, that is,
\begin{align}\label{PtildeA3}
[\widetilde{P}, uA_3]=&\sum_{i=0}^{\infty}[z^i]\left(-\frac{z^3}{(1+z)^3}+\frac{z^2}{(1+z)^2}\eta\right)\left(1+\frac{1}{z}\right)^4 u^{1-i}L_i\nonumber\\
=&-uL_0-L_{-1}+e^UL_{-1}e^{-U}.
\end{align}
The third part is the first order differential operator $\frac{1}{2}[\widetilde{P}, [\widetilde{P}, uA_3]]$, which is in the form
\begin{align}\label{Q1}
&\sum_{i=2}^{\infty}[z^i]\frac{1}{2}\left(-\frac{z^3}{(1+z)^3}+\frac{z^2}{(1+z)^2}\eta\right)\left(-1-\frac{1}{z}+ \frac{(1+z)^2}{z^2}\eta\right)u^{i-2}\alpha_i\nonumber\\
=&\sum_{i=2}^{\infty}[z^i]\left(-\frac{z}{1+z}\eta+\frac{1}{2}\eta^2\right)u^{i-2}\alpha_i+\sum_{i=2}^{\infty}[z^i]\frac{z^2}{(1+z)^3}u^{i-2}\frac{\partial}{\partial q_i}.
\end{align}
And the operator $e^{\widetilde{P}}e^U(-L_{-1})e^{-U}e^{-\widetilde{P}}$ contains an extra first order differential operator, which is in the form
\begin{align}\label{Q2}
&\sum_{i=2}^{\infty}[z^i]\left(-\frac{z^3}{(1+z)^3}+\frac{z^2}{(1+z)^2}\eta\right)\left(-\frac{(1+z)^2}{z^2}\eta\right)u^{i-2}\alpha_i\nonumber\\
=&\sum_{i=2}^{\infty}[z^i]\left(\frac{z}{1+z}\eta-\eta^2 \right)u^{i-2}\alpha_i.
\end{align}
The operator $e^{\widetilde{P}}e^U(\frac{1}{8}q_4)e^{-U}e^{-\widetilde{P}}$ is $e^U(\frac{1}{8}q_4)e^{-U}$ plus the constant $\frac{1}{24}u$. Finally, we sum up equations from (\ref{uA3}) to (\ref{Q2}) and the constant  $\frac{1}{24}u$. This gives us
\begin{multline*}
u^4M_0+4u^3M_{-1}+6u^2M_{-2}+4u M_{-3}+M_{-4}-uL_0-L_{-1}\\
+\frac{1}{4}u^2q_2+\frac{1}{3}uq_3+\frac{1}{8}q_4 +\sum_{i=2}^{\infty}[z^i]\frac{z^2}{(1+z)^3}u^{i-2}\frac{\partial}{\partial q_i}+\frac{1}{24}u
-\frac{1}{2}\sum_{n=2}^{\infty}[z^n]\eta^2 n\frac{\partial}{\partial q_n},
\end{multline*}
and completes the proof of the lemma.
\end{proof}

Now we finalize the proof of Theorem \ref{Kazarian}. 
Note that
\begin{align*}
&\sum_{i=2}^{\infty}[z^i]\frac{z^2}{(1+z)^3}u^{i-2}\frac{\partial}{\partial q_i}+\sum_{k=0}^{\infty}(-1)^k\binom{k+2}{k}u^{k+1}\frac{\partial}{\partial q_{k+3}}= E^{(1)}.
\end{align*}
Then, by Eq.(\ref{F1}), Proposition \ref{F2} and Proposition \ref{q2m}, we have
\begin{align*}
&\left(u^4\widetilde{M}-\frac{1}{3}u^2\frac{\partial}{\partial u}\right)\cdot \exp(F_H(u,q))=0.
\end{align*}
which leads us to Kazarian's formula.

\subsection{From Kazarian's formula to the Virasoro constraints and Dilaton equation}\label{Sec43}
In this subsection, we briefly explain how to derive the Virasoro constraints (\ref{maineq}) and Eq.\eqref{dilaton3} using only Kazarian's formula. First, we prove Proposition \ref{epsiloncommutator}. Our computation using the power series shows that
\begin{align*}
&[u^{-3}A_3,u^{2k}E^{(k)}]=-2\sum_{i=2k-4}^{\infty} \nu_i^{(k-2)}u^{i-3}L_i,\\
&[-u^{-3}L_0-u^{-4}L_{-1},u^{2k}E^{(k)}]=2\sum_{j=2m-1}^{\infty}\gamma_j^{(m-2)}ju^{j-3}\frac{\partial}{\partial q_j}\\
&[\frac{1}{4}u^{-2}q_2+\frac{1}{3}u^{-3}q_3+\frac{1}{8}u^{-4}q_4,u^{2k}E^{(k)}]=-\frac{1}{4}\delta_{k,2}\\
&[-\frac{1}{3}u^{-2}\frac{\partial}{\partial u}-\frac{1}{3}u^{-3}L_0,u^{2k}E^{(k)}]=0.
\end{align*}
The summation of the above for equations gives us the proposition. Hence, by Theorem \ref{Kazarian} and Proposition \ref{q2m}, we can conclude that
\begin{equation*}
 V_{m-2}^{(H)} \cdot \exp(F_H(u,q)) = -\frac{1}{2}[\widetilde{M}-\frac{1}{3}u^{-2}\frac{\partial}{\partial u},u^{2m}E^{(m)} ] \cdot \exp(F_H(u,q))=0, \quad (m\geq 1).
\end{equation*}
This proves the Virasoro constraints. Eq.\eqref{dilaton3} can be deduced from Eq.\eqref{F1} (which are deduced from the Virasoro constraints that we already proved) and Proposition \ref{F2} (which are independent results) directly.

If we continue the path from $\exp(F_H(u,q))$ to $\exp(F_K(q))$ using Theorem \ref{KtoHodge}, we will land on the Virasoro constraints of Kontsevich-Witten tau-function with no surprises.

\section{More remarks}
Let $H$ be the generating function of simple Hurwitz numbers, (we skip the definition of simple Hurwitz numbers, which can be found in many papers such as \cite{MK}). The function $\exp(H)$ has a cut-and-join representation \cite{GJ}.
\begin{equation}\label{cutandjoin}
e^H=e^{\beta M_0}\cdot e^{q_1},
\end{equation}
where
$$M_0=\frac{1}{2}\sum_{i,j>0}\left((i+j)q_iq_j\frac{\partial}{\partial q_{i+j}}+ijq_{i+j}\frac{\partial^2}{\partial q_i\partial q_j}\right).$$
Since the cut-and-join operator $M_0$ belongs to $\widehat{\mathfrak{gl}(\infty)}$, the function $\exp(H)$ is a tau-function for the KP hierarchy. We call it the Hurwitz tau-function.

In \cite{MK}, Kazarian obtained the function $F_H(u,q)$ we have introduced before from $H$ using the ELSV formula (see \cite{ELSV}). In fact, setting $u=\beta^{\frac{1}{3}}$, we have (Theorem 2.3 in \cite{MK})
\begin{multline*}
\exp(-\sum_{m=1}^{\infty} a_{-m}\beta^m\sum_{i=1}^{\infty} iq_{i+m}\frac{\partial}{\partial q_i})
\cdot (H-H_{0,1}-H_{0,2})=\exp(\frac{4}{3}\log{\beta} L_0)\cdot F_H(\beta^{\frac{1}{3}},q),
\end{multline*}
where
$$H_{0,1}=\sum_{b=0}^{\infty}\frac{b^{b-2}}{b!}\beta^{b-1}q_b, \quad H_{0,2}=\frac{1}{2}\sum_{b_1,b_2=1}^{\infty}\frac{b_1^{b_1}b_2^{b_2}}{(b_1+b_2)b_1!b_2!}\beta^{b_1+b_2}q_{b_1}q_{b_2}.$$
And the numbers $\{a_{-m}\}, m>0,$ are determined by
\begin{equation*}
\exp(-\sum_{m=1}^{\infty} a_{-m}z^{1+m}\frac{\partial}{\partial z})\cdot z=\frac{z}{1+z}e^{-\frac{z}{1+z}}.
\end{equation*}
Lemma 4.5 in \cite{MK} actually implies that
\begin{equation*}
\exp(-\sum_{m=1}^{\infty} a_{-m}\beta^mL_{-m})=\exp(-\sum_{m=1}^{\infty} a_{-m}\beta^mX_{-m})\exp(-H_{0,2}).
\end{equation*}
This can be shown using the method introduced in \cite{LW} and Appendix A.2. Hence, if we let
\begin{equation*}
F_H(\beta)=\exp(\frac{4}{3}\log\beta L_0)\cdot F_H(\beta^{\frac{1}{3}},q),
\end{equation*}
Then we can deduce that
\begin{equation*}
\exp(H)=\exp(H_{0,1})\exp(\sum_{m>0} a_{-m}\beta^mL_{-m})\cdot \exp(F_H(\beta)).
\end{equation*}
The Virasoro constraints for the tau-function $\exp(F_H(\beta))$ are $\{V_m^{(H)}\}$ with 
$u=\beta^{1/3}, q_k\rightarrow \beta^{4k/3}q_k$. As mentioned in \cite{A}, one may wish to derive a set of Virasoro constraints for Hurwitz tau-function $\exp(H)$ by a simple conjugation of $V_m^{(H)}$ using the operator $\exp(\sum_{m>0} a_{-m}\beta^mL_{-m})$. However, since $V_m^{(H)}$ is an infinite linear combination of Virasoro operator $L_i$ with $i\geq 2m$, it is very likely that such conjugation will cause a divergence problem. And it seems impossible to obtain a constraint involving only finite number of $L_i$ by a certain linear combination of $V_m^{(H)}$.

\section*{Appendix}
\subsection*{A.1 String and dilaton equations}
Here we give a brief argument of the string equation: for $2g-2+n>0$, $1\leq j\leq g$,
\begin{equation}\label{string2}
<\lambda_j \tau_0\prod_{i=1}^n\tau_{d_i}>=\sum_{k=1}^n<\lambda_j \tau_{d_k-1}\prod_{i\neq j}\tau_{d_i}>.
\end{equation}
and dilaton equation: for $2g-2+n>0$, $1\leq j\leq g$,
\begin{equation}\label{dilaton}
<\lambda_j \tau_1\prod_{i=1}^n\tau_{d_i}>=(2g-2+n)<\lambda_j \prod_{i=1}^n\tau_{d_i}>.
\end{equation}
For the case $m=-1$ in formula (\ref{maineq}), we have
\begin{multline}\label{string1}
V_{-1}^{(H)}\cdot\exp(F_H(u,q))\\
=\left(L_{-2}+2uL_{-1}+u^2L_0-\frac{u^2}{24}-\sum_{j=1}^{\infty}(-u)^{j-1}j\frac{\partial}{\partial q_j}\right)\cdot\exp(F_H(u,q))=0.
\end{multline}
Observe that, for the polynomial $\phi_k(z)$, we have
\[
\phi_k(-1) =
\begin{cases}
-1 & k=0 \\
0 & k\geq 1 .
\end{cases}
\]
Then,
\[
\sum_{j=1}^{\infty}(-u)^{j-1}j\frac{\partial}{\partial q_j} \widetilde{\phi_{k}}(u,q) =
\begin{cases}
1 & k=0 \\
0 & k\geq 1 .
\end{cases}
\]
Taking into account Eq.(\ref{def:phi}), we can obtain the following equation from (\ref{string1}): for $2g-2+n>0$,
\begin{align*}
&\sum_{\substack{1\leq j\leq g\\d_i\geq 0}}(-1)^j<\lambda_j \tau_0\prod_{i=1}^n\tau_{d_i}>u^{2j}\prod_{i=1}^n\widetilde{\phi_{d_i}}(u,q) \nonumber\\
=&\sum_{\substack{1\leq j\leq g\\d_i\geq 0}}\sum_{k=1}^n(-1)^j<\lambda_j \tau_{d_k-1}\prod_{i\neq j}\tau_{d_i}>u^{2j}\prod_{i=1}^n\widetilde{\phi_{d_i}}(u,q).
\end{align*}
Since $\{\widetilde{\phi_{k}}(u,q)\}$ is a linear independent set for $k\geq 0$, the above equation immediately implies the string Eq.(\ref{string2}).

Next we explain how Eq.\eqref{dilaton3} gives us the dilaton equation. Let
$$c_n=\sum_{k=0}^{\infty}(-1)^k\binom{k+2}{k}u^{k}\frac{\partial}{\partial q_{k+3}}\cdot \widetilde{\phi_n}(u,q).$$
The cases $n=0$ and $n=1$ are obvious. For $n\geq 2$, we notice that
$$c_n=[z^{0}] \left\{\frac{z^3}{(1+z)^3}\left(\phi_n(z^{-1})\right)\right\}.$$
Then, from Eq.(\ref{cn}), we have
\[
c_n=
\begin{cases}
1 & n=1 \\
0 & n\neq 1 .
\end{cases}
\]
Hence, using Eq.(\ref{term:2g-2+n}) and the above result, we can obtain the following equation from Eq.(\ref{dilaton3}): for $2g-2+n>0$,
\begin{align*}
&\sum_{\substack{1\leq j\leq g\\d_i\geq 0}}(-1)^j<\lambda_j\tau_1\prod_{i=1}^n\tau_{d_i}>u^{2j}\prod_{i=1}^n\widetilde{\phi_{d_i}}(u,q)\\
=&(2g-2+n)\sum_{\substack{1\leq j\leq g\\d_i\geq 0}}(-1)^j<\lambda_j\prod_{i=1}^n\tau_{d_i}>u^{2j}\prod_{i=1}^n\widetilde{\phi_{d_i}}(u,q).
\end{align*}
This proves the dilaton equation Eq.(\ref{dilaton}).

\subsection*{A.2 Some technical results}
We recall a method introduced in \cite{LW}. For $k=1,2,\dots$, let
$$\mathcal{D}_{-k}=z^{1+k}\frac{\d}{\d z}-kz^k, \quad A(u)=-\sum_{k=1}^{\infty}a_ku^k\mathcal{D}_{-k}.$$
Then
$$\exp(A(u))=\exp(-\sum a_ku^kz^{1+k}\frac{\partial}{\partial z})\exp(g(uz)).$$
It is easy to verify that, for $m,n>0$,
$[\mathcal{D}_{-m},\mathcal{D}_{-n}]=(n-m)\mathcal{D}_{-m-n}.$
Then there exists a unique sequence of numbers $\{d_n\}$, such that
$$\frac{\partial}{\partial u}e^{A(u)}=(\sum_{n=1}^{\infty}d_nu^{n-1}\mathcal{D}_{-n}) e^{A(u)}.$$
On the other hand, we have
$$\frac{\partial}{\partial u}e^{A(u)}= \left(\sum_{n=1}^{\infty}d_nu^{n-1}z^{1+k}\frac{\partial}{\partial z}+ \exp(-\sum_{k=1}^{\infty}a_ku^kz^{1+k}\frac{\partial}{\partial z})\cdot \frac{\partial}{\partial u} g(uz) \right) e^{A(u)}.$$
This gives us
$$\frac{\partial}{\partial u} g(uz)=\exp(\sum_{k=1}^{\infty}a_ku^kz^{1+k}\frac{\partial}{\partial z})\cdot \left( -\sum_{n=1}^{\infty} d_nu^{n-1}nz^n\right).$$
And
\begin{equation}\label{gz}
z\frac{\d}{\d z}g(z)=\exp(\sum a_kz^{1+k}\frac{\d}{\d z})\cdot\left(-\sum_{n=1}^{\infty}d_{n}nz^{n}\right).
\end{equation}
We refer the readers to \cite{LW} for more details about the above argument.

\begin{Lem}\label{Lemphinz}
$$\phi_{n}(z) =\sum_{i=0}^n(-1)^i(2n-2i-1)!!C_i\left(f^{2n-2i+1}\right)_{+}.$$
\end{Lem}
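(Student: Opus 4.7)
The plan is to deduce Lemma \ref{Lemphinz} from formula \eqref{f} by purely formal inversion of a convolution. Setting $\Psi_n := (2n-1)!!\,(f^{2n+1})_+$, formula \eqref{f} reads
\[
\Psi_n = \sum_{i=0}^n C_i\,\phi_{n-i}(z), \qquad n \geq 0.
\]
Because $C_0 = 1$, this is a triangular linear system that uniquely recovers each $\phi_n$ from $\Psi_0,\dots,\Psi_n$. I therefore only need to verify that the candidate
\[
\phi_n^{\ast} := \sum_{i=0}^n (-1)^i C_i\,\Psi_{n-i} = \sum_{i=0}^n (-1)^i (2n-2i-1)!!\, C_i\,(f^{2n-2i+1})_+
\]
satisfies the same recursion, and then conclude $\phi_n^{\ast} = \phi_n$ by uniqueness of the solution.

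Next, substituting the defining sum for $\phi^{\ast}_{n-i}$ into $\sum_{i=0}^n C_i \phi^{\ast}_{n-i}$ and regrouping by the new index $k = i + j$, I would obtain
\[
\sum_{i=0}^n C_i \phi^{\ast}_{n-i} = C_0^2\, \Psi_n + \sum_{k=1}^n \Psi_{n-k}\sum_{i=0}^k (-1)^{k-i} C_i C_{k-i}.
\]
The whole lemma therefore reduces to the single numerical identity
\[
\sum_{i=0}^k (-1)^{k-i} C_i C_{k-i} = \delta_{k,0}, \qquad k \geq 0,
\]
i.e.\ that the sequences $\{C_i\}$ and $\{(-1)^i C_i\}$ are Cauchy-convolution inverses of each other.

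Finally, I would establish this identity by invoking the exponential presentation of $G(y) := \sum_{i \geq 0} C_i y^i$ recorded in the Remark of Section \ref{Sec2}, namely
\[
G(y) = \exp\!\left(\sum_{k \geq 1} \frac{B_{2k}}{2k(2k-1)}\, y^{2k-1}\right).
\]
The exponent is an odd power series in $y$, so the substitution $y \mapsto -y$ simply negates it and gives $G(-y) = G(y)^{-1}$. The coefficient of $y^k$ in $G(y)G(-y) = 1$ is exactly the required identity. No step presents a real obstacle: the triangular inversion is entirely formal and the only substantive input is the exponential formula for $G(y)$, which is already recorded in the paper.
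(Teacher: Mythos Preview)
Your proof is correct and follows essentially the same route as the paper's: both invert the convolution relation \eqref{f}, the paper by induction on $n$ and you by a direct formal-power-series inversion. Both arguments hinge on the identity $\sum_{i=0}^k (-1)^{k-i} C_i C_{k-i} = \delta_{k,0}$, which the paper uses tacitly in its last induction step and which you make explicit and justify via $G(y)G(-y)=1$.
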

\begin{proof}
First, we know that $z=(f)_{+}$. By induction, assume that, for $n=k$, $$\phi_{k}(z)=\sum_{i=1}^k(-1)^i(2k-2i-1)!!C_i(f^{2k-2i+1})_{+}.$$
When $n=k+1$, we have, by Eq.(\ref{f}),
\begin{align*}
\phi_{k+1}(z) &= (2k+1)!!\left(f^{2k+3}\right)_{+}-\sum_{i=1}^{k+1} C_i\phi_{k+1-i}(z)\\
&=(2k+1)!!\left(f^{2k+3}\right)_{+}-\sum_{n=1}^{k+1}(2k-2n+1)!!\left(f^{2k-2n+3}\right)_{+}\sum_{j=0}^{n-1}(-1)^jC_jC_{n-j}\\
&=\sum_{n=0}^{k+1}(-1)^n(2k-2n+1)!!C_n\left(f^{2k-2n+3}\right)_{+}.
\end{align*}
This completes the proof.
\end{proof}

\subsection*{A.3 More proof of Proposition \ref{F2}}
We consider the correspondence $L_n\rightarrow l_n, M_k\rightarrow m_k, n\partial q_n\rightarrow z^n$, where
\begin{align*}
l_n &= -z^{1+n}\frac{\d}{\d z}-\frac{1}{2}nz^n-z^n\\
m_k & =\frac{1}{2}z^k (z\frac{\d}{\d z}+\frac{1}{2})^2+\frac{(k+1)}{2}z^k(z\frac{\d}{\d z}+\frac{1}{2})+ \frac{1}{12}(k+1)(k+2)z^k.
\end{align*}
Then
\begin{align*}
& [\frac{1}{n}z^n,l_k]=z^{n+k} \quad\mbox{and}\quad [\frac{1}{n}z^n,m_k]=l_{n+k},\\
& [l_m,l_n]=(m-n)l_{m+n}+\frac{m^3-m}{12}\delta_{m+n,0},\\
& [l_n,m_k]=(2n-k)m_{n+k}+\frac{n^3-n}{12}z^{n+k}.
\end{align*}
These formulas agree with the formula presented in Sect.\ref{Sec42}. Let 
$$\Phi_z^{+} =\exp(\sum_{m=1}^{\infty}a_mz^{1+m}\frac{\d}{\d z}).$$
Then, 
\begin{align*}
\exp(\sum_{m=1}^{\infty}a_m (-z^{1+m}\frac{\d}{\d z}-\frac{1}{2}mz^m-z^m)) &= e^{-\Phi_z^{+}}\frac{1}{(1+h)}=\frac{1}{1+z}e^{-\Phi_z^{+}}.
\end{align*}
Now, by the definition of $m_k$, we have
\begin{equation*}
m_{-4}=\frac{1}{2}z^{-4}(z\frac{\d}{\d z})^2-z^{-4}(z\frac{\d}{\d z})-\frac{1}{8}z^{-4}.
\end{equation*}
From Eq.(\ref{14641}), we can see that
\begin{multline*}
\frac{1}{1+z}e^{-\Phi_z^{+}}m_{-4}e^{\Phi_z^{+}}(1+z)=m_0+4m_{-1}+6m_{-2}+4m_{-3}+m_{-4}+Q(z),
\end{multline*}
where $Q(z)$ is a polynomial. To compute $Q(z)$, we first compute
\begin{align*}
& \frac{1}{1+z}\frac{1}{2}\frac{(1+z)^4}{z^4} \left(z\frac{\d}{\d z}\right)^2 (1+z)\\
=& \frac{1}{2}\frac{(1+z)^4}{z^4}\left(z\frac{\d}{\d z}\right)^2+ \frac{(1+z)^3}{z^3}\left(z\frac{\d}{\d z}\right)+\frac{(1+z)^3}{2z^3}
\end{align*}
Then we have
\begin{align*}
 & e^{-\Phi_z^{+}}m_{-4} e^{\Phi_z^{+}}
 =\frac{1}{2} \frac{(1+z)^4}{z^4}\left(z\frac{\d}{\d z}\right)^2-\frac{(1+z)^3}{z^4}\left(z\frac{\d}{\d z}\right)-\frac{1}{8}\eta^{-4}.
\end{align*}
Hence
\begin{align*}
& \frac{1}{1+z}e^{-\Phi_z^{+}}m_{-4} e^{\Phi_z^{+}}(1+z) \\
=&m_0+4m_{-1}+6m_{-2}+4m_{-3}+m_{-4}+\frac{1}{8}z^{-4}+\frac{1}{3}z^{-3}+\frac{1}{4}z^{-2}-\frac{1}{24}-\frac{1}{8}\eta^{-4}.
\end{align*}
This shows that
$$Q(z)=\frac{1}{8}z^{-4}+\frac{1}{3}z^{-3}+\frac{1}{4}z^{-2}-\frac{1}{24}-\frac{1}{8}\eta^{-4}.$$
And therefore, 
\begin{equation*}
e^U(\frac{1}{8}q_4)e^{-U}=\frac{1}{8}\sum_{i=-4}^{\infty} [z^{i}](\eta^{-4})u^{i+4}\alpha_{i}.
\end{equation*}

\vspace{6pt}
\noindent
{\bf Remark:} It is definitely possible to prove Proposition \ref{F2} directly using the operators $l_n$ and $m_k$. But we think the use of power series like (\ref{gz}) and (\ref{14641}) simplifies a lot of computations, while the direct computation still needs to take care of the nested commutators of differential operators involving $l_n$ and $m_k$.

\vspace{10pt} \noindent
\footnotesize{\sc shuai guo:
school of mathematical sciences, peking University, beijing, china.}\\
\footnotesize{E-mail address:  guoshuai@math.pku.edu.cn}\\
\\
\footnotesize{\sc gehao wang: 
school of mathematics (zhuhai), sun yat-sen university, zhuhai, china. }\\
\footnotesize{E-mail address:  gehao\_wang@hotmail.com}

\end{document}